
\documentclass[journal]{IEEEtran}
%


%

%
\usepackage{cite}

%
\usepackage{graphicx}
\usepackage{subfigure}
\usepackage{epstopdf}
\usepackage{mathrsfs}
\usepackage{pifont}
\usepackage{multirow}
\usepackage{algorithm}
\usepackage{algorithmic}
\usepackage{amsthm,amsmath,amsfonts}
\usepackage{bbm}
\usepackage{mathrsfs}
\newtheorem{lemma}{\textbf{Lemma}}
\usepackage{bm}
\usepackage{subfigure}
\newtheorem{corollary}{Corollary}

\newtheorem{mydef}{Definition}

\usepackage{color}

\hyphenation{op-tical net-works semi-conduc-tor}

\begin{document}
%
\title{Multicast eMBB and Bursty URLLC Service Multiplexing in a CoMP-Enabled RAN}
%
%
%

\author{Peng Yang, Xing Xi, Yaru Fu, Tony Q. S. Quek,~\IEEEmembership{Fellow,~IEEE}, Xianbin Cao,~\IEEEmembership{Senior Member,~IEEE},
        Dapeng Wu,~\IEEEmembership{Fellow,~IEEE}
\thanks{
P. Yang, Y. Fu and T. Q. S. Quek are with the Information Systems Technology and Design, Singapore University of Technology and Design, 487372 Singapore.

X. Xi and X. Cao are with the School of Electronic and Information Engineering, Beihang University, Beijing 100083, China, and also with the Key Laboratory of Advanced Technology, Near Space Information System (Beihang University), Ministry of Industry and Information Technology of China, Beijing 100083, China.

D. Wu is with the Department of Electrical and Computer Engineering, University of Florida, Gainesville FL 32611 USA.}
}

\maketitle

\begin{abstract}
This paper is concerned with slicing a radio access network (RAN) for simultaneously serving two typical 5G and beyond use cases, i.e., enhanced mobile broadband (eMBB) and ultra-reliable and low latency communications (URLLC).
Although many researches have been conducted to tackle this issue, few of them have considered the impact of bursty URLLC.
The bursty characteristic of URLLC traffic may significantly increase the difficulty of RAN slicing on the aspect of ensuring a ultra-low packet blocking probability. To reduce the packet blocking probability, we re-visit the structure of physical resource blocks (PRBs) orchestrated for bursty URLLC traffic in the time-frequency plane based on our theoretical results. Meanwhile, we formulate the problem of slicing a RAN enabling coordinated multi-point (CoMP) transmissions for multicast eMBB and bursty URLLC service multiplexing as a multi-timescale optimization problem. The goal of this problem is to maximize multicast eMBB and bursty URLLC slice utilities, subject to physical resource constraints. To mitigate this thorny multi-timescale problem, we transform it into multiple single timescale problems by exploring the fundamental principle of a sample average approximation (SAA) technique. Next, an iterative algorithm with provable performance guarantees is developed to obtain solutions to these single timescale problems and aggregate the obtained solutions into those of the multi-timescale problem. We also design a prototype for the CoMP-enabled RAN slicing system incorporating with multicast eMBB and bursty URLLC traffic and compare the proposed iterative algorithm with the state-of-the-art algorithm to verify the effectiveness of the algorithm.
\end{abstract}

\begin{IEEEkeywords}
RAN slicing, multicast eMBB, bursty URLLC, service multiplexing, coordinated multi-point transmission
\end{IEEEkeywords}

%
\IEEEpeerreviewmaketitle

\section{Introduction}
%
%
%
%
\IEEEPARstart{5}{G} and emerging 6G wireless networks are envisioned to accommodate different service requirements concerning throughput, latency, reliability, availability and operational requirements as well, e.g., energy efficiency and cost efficiency \cite{rost2017network}.
These service requirements are proposed by mobile networks and some novel and significant application areas such as Industry 4.0, vehicular communication, and smart grid. Owing to the huge market prospects for these application areas, the International Telecommunication Union (ITU) has categorized the services proposed by them into three major use cases: enhanced mobile broadband (eMBB) including ultra-high definition (UHD) TV, massive machine-type communications (mMTC) for metering, logistics, smart agriculture, and ultra-reliable and low latency communications (URLLC) for autonomous driving and automated factory \cite{series2015imt}.
Further, in order to provide cost-efficient solutions, it is agreed by some telecommunication organizations including the Third Generation
Partnership Project (3GPP) and the Next Generation Mobile
Network (NGMN) Alliance on the convergence of each use case onto a common physical infrastructure instead of deploying individual network solution for each use case.

To achieve the goal of converging all use cases onto a shared infrastructure, the concept of network slicing has been proposed. The key idea of network slicing is to logically isolate network resources and functions customized for specific requirements of a common physical infrastructure \cite{alliance2015ngmn}. A collection of logically isolated core network (CN) and radio access network (RAN) functions is deemed as a network slice. Most recent researches on network slicing focus on slicing the CN and the RAN.
Slicing the CN affects both functionalities of the control plane such as mobility management, session management, and authentication (as hosted in mobility management entities and home subscribe
servers), and functionalities of the user plane (UP) (e.g., those in the serving gateway and packet data network gateway), both of which become programmable and auto-configurable.
The research of RAN slicing is still in its infancy and is highly challenging. This is due to the sharing characteristic of radio resources and complicate parameter configurations, e.g., the design of the time-frequency plane, round trip time (RTT), transmission time interval (TTI), and hybrid automatic repeat request (HARQ) options \cite{campolo20175g}. This paper aims at studying the issue of slicing RAN resources to provide better network resource isolation and increase statistical multiplexing.

\subsection{Prior Work}
During the past few years, plenty of works on implementing resource isolation and improving statistical multiplexing had been contributed to the RAN slicing research community.
For example, the works in \cite{albonda2019efficient, alsenwi2019embb, guo2019enabling,yang2020ran,li2019side,shi2018hierarchical,kurtz2018network,chien2019end,ferrus20185g,salhab2018optimization} exploited the performance of service multiplexing of RAN slicing on the aspect of radio resource optimization. Based on an off-line reinforcement learning and a low complexity heuristic algorithm, the work in \cite{albonda2019efficient} proposed to allocate RAN resources to eMBB and vehicle-to-everything (V2X) slices such that the resource utilization was maximized and the quality of service (QoS) requirements of eMBB and V2X slices were fulfilled.
The work in \cite{alsenwi2019embb} proposed a risk-sensitive based formulation to allocate network resources to incoming URLLC traffic while minimizing the risk (low data rate) of eMBB transmission and ensuring the reliability of URLLC transmission. Additionally, the work in \cite{guo2019enabling} developed a novel slice scheduling framework to enable 5G RAN slicing for eMBB, URLLC, and mMTC service multiplexing.

Different from the above works \cite{albonda2019efficient, alsenwi2019embb, guo2019enabling,yang2020ran,li2019side,shi2018hierarchical,kurtz2018network,chien2019end,ferrus20185g,salhab2018optimization} considering orthogonal spectrum resources, some other works \cite{matera2018non,park2018urllc,popovski20185g} had discussed the nonorthogonal case to improve the spectrum utilization.
From an information-theoretic perspective, the work in \cite{matera2018non} discussed the performance of a cloud RAN architecture of serving URLLC and eMBB traffic in a non-orthogonal multiple access (NOMA) manner.
The work in \cite{park2018urllc} investigated methods of non-orthogonal eMBB and URLLC slicing to support virtual reality over wireless cellular systems.
Besides, the work in \cite{popovski20185g} researched the advantages of allowing for non-orthogonal sharing of RAN resources in uplink communications from a group of eMBB, URLLC, and mMTC devices to a base station (BS).

Except for the RAN resource optimization, some researches designed architectures of implementing RAN slicing for service multiplexing. For instance,
the work in \cite{adamuz2019harmonizing} proposed a description model to enable the translation of RAN slice requirements (i.e., low latency, high reliability, high throughput, and massive device support) into customized virtualized radio functionalities defined through network function virtualization descriptors.

\subsection{Motivation and Contribution}
Differ from previous research efforts on enabling flexible and scalable RAN slicing for statistical multiplexing, this paper explores the cost efficiency issue of slicing the radio resource shared RAN for eMBB and URLLC service multiplexing.

This research topic is quite challenging as it has to mitigate at least three tricky issues \cite{tang2019service}: \emph{1) Two timescales issue:} RAN slicing is expected to be executed in a timescale of minutes to hours so as to keep in pace with the timescale of slicing upper layers. Nevertheless, wireless channel changes in a timescale of milliseconds, which is much shorter than the duration of a slice operation. As a result, how to tackle the two timescales issue is challenging;
\emph{2) Isolation of inter-slice interference:} different types of slices (i.e., eMBB slices and URLLC slices in this paper) in a RAN slicing system share the common physical channel; thus, how to isolate inter-slice interference is a big challenge;
\emph{3) Maximization of total utility:} different types of slices in a RAN slicing system also share common radio resources; thus, how to efficiently orchestrate resources for diverse slices such that the total system utility can be maximized is difficult.

A recent work \cite{tang2019service} proposed a C-RAN slicing architecture for eMBB and URLLC service multiplexing to deal with the above challenging issues. Specifically, it first utilized an alternating direction method of multipliers (ADMM) \cite{boyd2011distributed} joint with a sample average approximation (SAA) technique \cite{kim2015guide} to tackle the two timescales issue via simplifying the two timescales problem into multiple single timescale problems. Then
a flexible frequency division duplex (FDD) technique was exploited to orthogonalize diverse slices to isolated inter-slices. At last, it designed a generic utility framework that maximized the total utility of eMBB and URLLC service multiplexing through efficiently admitting eMBB and URLLC slice requests.

However, the work \cite{tang2019service} assumed that URLLC traffic was uninterruptedly generated and ignored the significant bursty characteristic of URLLC traffic \cite{hou2018burstiness}. The bursty URLLC traffic may further exacerbate the solving difficulty of slicing the RAN for URLLC involved service multiplexing from the following three perspectives:
\begin{itemize}
    \item \textbf{Time-frequency plane design:} except for the challenging radio resource allocation for total utility maximization, the transmission of bursty URLLC traffic needs an efficient design of physical resource blocks (PRBs) in the time-frequency plane to reduce the packet blocking probability;
    \item \textbf{Resource utilization issue:} one of the efficient proposals in future wireless communication networks to handle the uncertainty (including bursty) is to reserve network resources, which may waste a large amount of resources. Therefore, improving resource utilization is non-trivial for the bursty URLLC service provision;
    \item \textbf{Slice demand and slice supply mismatch:} bursty URLLC packets need to be immediately scheduled (slice demand) if there are available resources and the system utility can be maximized. However, the slice creation in the RAN slicing system (slice supply) is time costly.
\end{itemize}

This paper investigates the coordinated multi-point (CoMP) enabled RAN slicing for multicast eMBB and bursty URLLC service multiplexing, and the main contributions of this paper can be summarized as follows:
\begin{itemize}
    \item Guided by theoretical results, we re-visit the time-frequency structure of PRBs orchestrated for bursty URLLC transmission to reduce the URLLC packet blocking probability.
    \item A concept of 'slice of subslices' followed by a resource mask algorithm are developed to mitigate the mismatch issue of URLLC slice demand and supply and improve the resource utilization as well.
    \item We define a multicast eMBB slice utility function and a bursty URLLC slice utility function reflecting parameters of eMBB and URLLC slice requests. We formulate the CoMP-enabled RAN slicing problem for multicast eMBB and bursty URLLC service multiplexing as a multi-timescale optimization problem with a goal of maximizing eMBB and URLLC slice utilities, subject to constraints on total system bandwidth and transmit power.
    \item Based on the fundamental principle of an SAA technique, we transform the multi-timescale problem into multiple mixed-integer positive semidefinite programming (MISDP) problems of the single timescale. An iterative algorithm, which is proven to be convergent, is developed to achieve solutions to these single timescale problems. Besides, in this algorithm, an ADMM method followed by a new restoration scheme with provable performance guarantees are exploited to aggregate the achieved single timescale solutions into multi-timescale ones.
    \item We also design a prototype for the CoMP-enabled RAN slicing system and conduct plenty of simulations to verify the effectiveness of the iterative algorithm.
\end{itemize}

\subsection{Organization}
The remainder of this paper is organized as follows: Section \uppercase\expandafter{\romannumeral2} builds the system model. Based on the model, a RAN slicing problem for multicast eMBB and bursty URLLC service multiplexing is formulated in Section \uppercase\expandafter{\romannumeral3}. Section \uppercase\expandafter{\romannumeral4} aims to transform the formulated problem. Section \uppercase\expandafter{\romannumeral5} and Section \uppercase\expandafter{\romannumeral6} propose to mitigate the transformed problem with system generated channel coefficients and sensed channel coefficients, respectively. In Section \uppercase\expandafter{\romannumeral7}, we design a RAN slicing system prototype. The simulation is conducted in Section \uppercase\expandafter{\romannumeral8}, and Section \uppercase\expandafter{\romannumeral9} concludes this paper.

\emph{Notation:} Boldface uppercase letters denote matrices,
whereas boldface lowercase letters denote vectors. The
superscripts $(\cdot)^{\rm T}$ and $(\cdot)^{\rm H}$ denote transpose and conjugate transpose matrix operators. ${\rm tr}(\cdot)$, ${\rm rank}(\cdot)$, $|\cdot|$ and $\lceil \cdot \rceil$ denote the trace,
the rank, the absolute value, and the rounding up operators,
respectively. By ${\bm X} \succeq 0$ we denote that $\bm X$ is a Hermitian positive-semidefinite matrix.

\section{System Model}
We consider a CoMP-enabled RAN slicing system for multicast eMBB and bursty URLLC multiplexing service provision. In this system, there are a number of $N^e$ eMBB ground user equipments (UEs), a number of $N^u$ URLLC ground UEs and $J$ BSs. All ground UEs are assumed to be spatially distributed in a restricted geographical area $\mathbb R$ according to a random distribution $\Phi$, and the BSs are assumed to be regularly distributed at the boundary of $\mathbb R$. Each BS is equipped with $K$ antennas, and each UE is equipped with a single receive antenna.
The $J$ BSs are connected by fiber to realize UE data sharing and time-frequency synchronization. Each UE will receive data transmitted by all BSs through physical downlink sharing channels (PDSCH) and will coherently merge the received data such that the inter-user interference can be suppressed.

Besides, the time of the system is discretized and is composed of two different timescales, i.e., time slot and minislot. At the beginning of each time slot, a software-defined RAN coordinator (SDRAN-C) in the system will decide whether to accept or reject received network slice requests reflecting profiles (such as the number and QoS requirements) of eMBB and URLLC UEs. If a slice request is accepted, the system will be reconfigured, which is time costly and usually in a timescale of minutes to hours, and system resources will be re-orchestrated to accommodate the slice requirement.
At the beginning of each minislot, BSs will generate beamformers matching time-varying channels.
We assume that each time slot can be divided into $T$ minislots and discuss two kinds of network slices, that is, multicast eMBB slices and Unicast URLLC slices. The collection of eMBB slices is denoted by ${\cal S}^e := \{1, 2, \ldots, {S^e}\}$, and the set of URLLC slices is denoted by ${\cal S}^u := \{1, 2, \ldots, {S}^u\}$.

\subsection{Multicast eMBB Slice Model}
According to the definition of a network slice (especially from the perspective of the QoS requirement of a slice), an eMBB network slice request can be defined as follows:
\begin{mydef}
\rm {For any multicast eMBB slice $s \in {\cal S}^e$, its network slice request is composed of two components \cite{tang2019service}:
\begin{itemize}
    \item \textbf{The number of eMBB UEs:} the symbol $I_s^e$ is utilized to represent the number of eMBB UEs grouped into the slice $s$.
    \item \textbf{QoS requirements of eMBB UEs:} eMBB UEs prefer high throughput and greater network capacity. As a result, we characterize the QoS requirements of eMBB UEs as their minimum data rates, denoted by $R_s$.
\end{itemize}
To this end, the tuple $\{I_s^e, R_s\}$ is used to represent an eMBB slice request of $s$.
}
\end{mydef}

\textbf{Remark:} all UEs, the set of which is denoted by ${\cal I}_s^e$, in an eMBB slice $s$ have the same QoS requirement. A slice request of $s$ can be accepted only if the QoS requirements of all UEs in $s$ are accommodated. A binary variable $b_{s}^e \in \{0, 1\}$ is then utilized to indicate whether the slice request of $s$ is accepted by the SDRAN-C. $b_s^e = 1$ denotes that the slice request is accepted; otherwise $b_s^e = 0$.

eMBB UEs may experience annoying inter-slice interference if two or more eMBB slice requests are accepted at the same time.
This type of interference may significantly degrade the QoS of eMBB UEs. Like the work in \cite{tang2019service}, a flexible frequency division multiple access (FDMA) technique is leveraged to mitigate the inter-slice interference and enhance the QoS experienced by eMBB UEs. In this technique, the resource block in the frequency plane assigned to each activated UE (i.e., a UE belonging to an accepted slice) can be tailored without violating the constraint on the total amount of spectrum. On the other hand, owing to the exploration of a coherent transmission and merge technique, the intra-slice interference can be effectively suppressed.

Next, we denote a beamformer pointing to all eMBB UEs in $s$ ($s \in {\cal S}^e$) transmitted by the $j$-th BS ($j \in {\cal J}$) at minislot $t$ by ${\bm v}_{j,s}(t) \in {\mathbb C}^K$. The time-varying channel between the $j$-th BS and the $i$-th UE ($i \in {\cal I}_s^e$) in $s$ at minislot $t$ is denoted as ${\bm h}_{ij,s}(t) \in {\mathbb C}^K$. Suppose that ${\bm h}_{ij,s}(t)$ is subject to a random distribution $\Phi_{\bm h}$ that is imperfectly known by the SDRAN-C.
Meanwhile, for any $i$, $j$, and $s$, the random variable ${\bm h}_{ij,s}(t)$ at each $t$ is assumed to be independent and identically distributed (i.i.d).

For all multicast UEs in $s$, let $u_s^e(t)$ be the sharing signal of them at $t$ with ${\mathbb E}[|u_s^e(t)|^2] = 1$. Thus, under the setup of the CoMP downlink transmission, the received signal ${\hat u}_{i,s}^e(t)$ of a multicast UE $i$ can be expressed as
\begin{equation}\label{eq:eMBB_receiving_signal}
    {\hat u}_{i,s}^e(t) = \sum\limits_{j \in {\cal J}} {{\bm h}_{ij,s}^{\rm H}}{{(t)}}{{\bm v}_{j,s}(t)u_s^e(t)}  + {\delta _{i,s}}(t), \forall i \in {\cal I}_s^e, s \in {\cal S}^e
\end{equation}
where the first term on the right-hand side (RHS) represents the desired signal by UE $i$ in $s$ and the second term $\delta_{i,s}(t) \sim {\cal CN}(0, \sigma_{i,s}^2) $ denotes the additive white Gaussian noise (AWGN) received at $i$. The corresponding SNR experienced by $i$ in $s$ at minislot $t$ can be written as
\begin{equation}\label{eq:eMBB_snr}
    SNR_{i,s}^e(t) = \frac{{|\sum\nolimits_{j \in {\cal J}} {{{\bm h}_{ij,s}^{\rm H}}{{(t)}}{{\bm v}_{j,s}}(t)} {|^2}}}{{\phi \sigma _{i,s}^2}}, \forall i \in {\cal I}_s^e, s \in {\cal S}^e
\end{equation}
where $\phi > 1$ represents the SNR loss due to imperfect channel state information (CSI) sensing at the receiver \cite{liu2014energy}.

With the mathematical expression of SNR, the achievable data rate $\gamma _{i,s}^e(t)$ of UE $i$ in $s$ at minislot $t$ can take the following form according to the Shannon formula
\begin{equation}\label{eq:eMBB_throught}
    \gamma _{i,s}^e(t) = \omega_s^e(\bar t){\log _2}(1 + SNR_{i,s}^e(t)), \forall i \in {\cal I}_s^e, s \in {\cal S}^e
\end{equation}
where $\omega_{s}^e(\bar t)$ denotes the system bandwidth allocated to $s$ at time slot $\bar t$.

Owing to the channel deep fading, active eMBB UEs may experience signal outage. We, therefore, model the necessary condition for the SDRAN-C to accept the request of slice $s$ as
\begin{equation}\label{eq:eMBB_QoS}
    {\rm Pr}(\gamma _{i,s}^e(t) \ge R_s) \ge 1 - \epsilon, \forall i \in {\cal I}_s^e, s \in {\cal S}^e
\end{equation}
where $\epsilon \in (0,1)$ is the maximum tolerable system outage probability.

\subsection{Bursty URLLC Slice Model}
Different from eMBB UEs in terms of QoS requirements, URLLC UEs need to successfully transmit and decode data packets with extremely low latency (1 ms) and extremely high reliability ($99.999\%$). Thus, the definition of a bursty URLLC slice request can be described as follows.
\begin{mydef}
\rm{For any bursty URLLC slice $s \in {\cal S}^u$, its network slice request is composed of four components:
\begin{itemize}
    \item \textbf{The number of URLLC UEs:} the symbol $I_s^u$ represents the number of URLLC UEs classified into the slice $s$.
    \item \textbf{QoS requirements of URLLC UEs:} URLLC UEs require low latency end-to-end transmissions, which is significantly different from eMBB UEs. Thus, the communication latency $D_s$ is leveraged to characterize the QoS requirements of URLLC UEs.
    \item \textbf{Codeword error decoding probability:} $\alpha$ that should not be greater than a threshold is used to represent the error probability of decoding a URLLC codeword\footnote{A URLLC packet will usually be coded before transmission, and the generated codeword will be transmitted in the air interface such that the transmission reliability can be improved.}.
    \item \textbf{Packet blocking probability:} $\beta$ that should be lower than a threshold is utilized to denote the URLLC packet blocking probability.
\end{itemize}
In this way, a four tuples $\{I_s^u, D_s, \alpha, \beta\}$ can be involved to represent a bursty URLLC slice request of $s$.
}
\end{mydef}

\textbf{Remark:} In a bursty URLLC slice request, $\alpha$ and $\beta$ are jointly utilized to characterize the reliability requirement of URLLC transmission. All UEs, the set of which is denoted by ${\cal I}_s^u$, in $s$ have the same communication latency requirement. Besides, a variable $b_s^u \in \{0, 1\}$ is introduced to indicate whether the request of $s$ can be accepted. If yes, we set $b_s^u = 1$; otherwise, $b_s^u = 0$.

As mentioned above, a slice cannot be immediately created even if it is accepted as the construction process of a network slice is time costly. However, URLLC packets have stringent ultra-low latency requirements. Once arrived, URLLC packets should be immediately scheduled and transmitted. Therefore, the slice creation with the scale of a time slot may be inappropriate for URLLC service. We then propose the following 'slice of subslices' concept to tackle this issue.

\subsubsection{Slice of subslices}
In the concept of 'slice of subslices', a URLLC slice (or called global slice) is virtually partitioned into multiple (local) subslices.
Local subslices may evolve individually for the timely and flexible URLLC service provision. There is no need to rebuild the global slice, which is time-consuming when the local evolution is executed.

Guided by the concept, we propose a subslice resource mask scheme under a crucial assumption.

\emph{Assumption:} (\textbf{Always request acceptance}) The SDRAN-C always accepts all the URLLC slice requests at each time slot if there is spare bandwidth. Otherwise, some URLLC slice requests will be declined.

Since the arrival process of URLLC packets has the sporadic and bursty characteristics \cite{liu2014energy}, we preferentially allocate bandwidth to eMBB slices and then reserve the spare bandwidth for URLLC slices at each time slot. Together with the stringent latency requirements of URLLC packets, this kind of request acceptance assumption is reasonable.

In principle, the resource mask scheme defines the resource mask as a vector indicating a batch of PRBs dynamically assigned to each subslice. The SDRAN-C is responsible for adjusting the resource mask via monitoring the channel, which achieves the adaptive resource allocation among URLLC subslices according to the changing network dynamics.
With a slight abuse of notation, for each URLLC slice $s \in {\cal S}^u$, we use the vector ${\bm b}_s^u(t) = [b_{1,s}^u(t), b_{2,s}^u(t), \ldots, b_{{I_s^u},s}^u(t)]$ to represent a resource mask, where $b_{i,s}^u \in \{0, 1\}$ for all $i \in {\cal I}_s^u$ denotes whether the SDRAN-C will allocate a batch of PRBs to the subslice corresponding to UE $i$. $b_{i,s}^u = 1$ if some PRBs are allocated; otherwise, $b_{i,s}^u = 0$.

Then, it is essential to represent URLLC subslice requests. Based on the definition of a URLLC slice request, we use the triple $\{D_s, \alpha, \beta\}$ to represent the request of a URLLC subslice. In this triple, the number of URLLC UEs in a subslice is not included as Unicast subslices are considered, i.e., there is only one UE in each subslice.

Although the advantages of dynamical PRB allocation are attractive, it is highly challenging to allocate an appropriate amount of PRBs (in both time and frequency planes) to a URLLC subslice. This is because
\begin{itemize}
    \item URLLC packets have a stringent low latency requirement; resources in the time plane allocated to URLLC packets cannot exceed the maximum packet latency.
    \item Even systems with a great bandwidth configuration may occasionally suffer from packet congestion owing to the stochastic variations in the packet arrival process, and occasionally, there may not be enough spare bandwidth to transmit a new URLLC packet simultaneously \cite{anand2018resource}.
\end{itemize}

We next study the efficient time-frequency structure design of PRB for URLLC transmission.

\subsubsection{Structure design of PRB for URLLC transmission}
Define a vector of URLLC packet arrival rates ${\bm \lambda} = ({\bm \lambda}_1, \ldots, {\bm \lambda}_s, \ldots, {\bm \lambda}_{S^u})$ with ${\bm \lambda}_s = (\lambda_{1,s},\ldots,\lambda_{I_{s}^u,s})$ representing the arrival rates of URLLC UEs in ${\cal I}_s^u$. A URLLC packet destined to $i \in {\cal I}_s^u$ is allocated with a bandwidth of $\omega_{i,s}^{u}$ for a period of time $d_{i,s}$. These values are related to the channel use $r_{i,s}^{u}$ by $\kappa \omega_{i,s}^{u}d_{i,s} = \lambda_{i,s}r_{i,s}^{u}$, where $\kappa $ is a constant denoting the number of channel uses per unit time per unit bandwidth of the FDMA frame structure and numerology. Since URLLC packets destined to $s$ have a deadline of $D_s$ seconds, we shall always choose $d_{i,s} \le D_s$\footnote{In this work, we consider the circumstance of one time transmission. However, in order to further improve the reliability of transmitting URLLC packets, some HARQ schemes, which focus on the performance analysis of retransmission, are deserved to be exploited in the future.}. For ease of analysis, we assume that for all UEs in $s$, $D_s$ is an integer multiple of $d_{i,s}$. Thus, the following vectors may be enough to characterize a scenario of URLLC service provision: the set of packet transmission latency for all UEs in $s$ is represented as ${\bm d} = \{{\bm d}_s\}$ with ${\bm d}_s = \{d_{1,s},\ldots,d_{I_s^u,s}\}$.
The channel use set for all UEs in $s$ is denoted as ${\bm r^u} = \{{\bm r}_s^u\}$ with ${\bm r}_s^u = \{r_{1,s}^u,\ldots,r_{I_s^u,s}^u\}$.
The correspondingly allocated bandwidth set is ${\bm \omega}^u = \{ {\bm \omega}_s^u \}$, where ${\bm \omega}_s^u = \{\omega_{1,s}^u,\ldots,\omega_{I_s^u,s}^u\}$, and the average packet arrival rates of all UEs in $s$ in a certain duration is denoted as $\bm{\rho} = \{\rho_1, \ldots, \rho_{S^u}\}$, where $\rho_s = \lambda_{i,s} d_{i,s}$.

On the one hand, shortening the packet transmission latency implies that fewer PRBs are available in the frequency plane, a fact that will definitely cause more queueing effect and significantly increase the blocking probability of a URLLC packet. On the other hand, narrowing a PRB in the frequency domain implies more concurrent transmissions, which is beneficial for decreasing the blocking probability of a URLLC packet. It will, however, incur high packet transmission latency.

Therefore, the following issue should be addressed when designing the structure of a PRB for URLLC transmission: \emph{how to tailor PRBs in the time-frequency plane to reduce the blocking probability of a URLLC packet}?

For each slice $s \in {\cal S}^u$, let $p_s({\bm \omega}^u, {\bm d}, {\bm \lambda}, W^u)$ denote the blocking probability experienced by an arrival packet destined to a UE in $s$. The following Lemma provides us with a crucial clue on the time-frequency resource orchestration for URLLC packet transmission.

\begin{lemma}\label{lem:probability}
For a given ${\bm \omega}^u$, ${\bm d}$, and a positive integer $q$, define ${\hat {\bm \omega}^u} = ({\bm \omega}_1^u, \ldots, {\bm \omega}_s^u/q, \ldots, {\bm \omega}_{S^u}^u)$ and ${\hat {\bm d}} = ({\bm d}_1, \ldots, q{\bm d}_s, \ldots, {\bm d}_{S^u})$. Under the case of one time transmission, if $\rho_s < 1$, then for a great system bandwidth $W^u$, we have $p_s({\bm \omega}^u, {\bm d}, {\bm \lambda}, W^u) \ge p_s({\hat {\bm \omega}^u}, {\hat {\bm d}}, {\bm \lambda}, W^u)$.
\end{lemma}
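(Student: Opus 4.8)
The plan is to recognize $p_s$ as the stationary blocking probability of an Erlang loss (Erlang--B) system and to exploit the fact that the reshaping $(\bm\omega_s^u,\bm d_s)\mapsto(\bm\omega_s^u/q,\,q\bm d_s)$ leaves the time--frequency area $\omega_{i,s}^u d_{i,s}$ of every slice-$s$ packet (hence its channel use $r_{i,s}^u$) unchanged while trading frequency granularity for time. Concretely, a slice-$s$ packet seizes one frequency channel of width $\omega_s^u$ for a holding time $d$, so the band $W^u$ provides $c_s=\lfloor W^u/\omega_s^u\rfloor$ parallel servers with offered load $A_s=\Lambda_s d$, where $\Lambda_s$ is the aggregate slice-$s$ arrival rate; by the insensitivity of the Erlang--B formula only the mean holding time matters, so the deterministic $d$ is harmless. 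First I would record the decisive bookkeeping: narrowing the channel by $q$ raises the server count to $c_s'=qc_s$, while lengthening the holding time by $q$ raises the offered load to $A_s'=qA_s$, so the per-server utilization, which the model identifies with $\rho_s=A_s/c_s$, is invariant under the reshaping and is $<1$ by hypothesis. Since only slice $s$ is altered, in the ``great $W^u$'' regime I would argue that the remaining slices act as a fixed background and the slices asymptotically decouple, reducing the claim to comparing the single pair $\bigl(E(c_s,A_s),\,E(qc_s,qA_s)\bigr)$.

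The core step is the monotonicity $E(qc,qA)\le E(c,A)$ at a fixed per-server utilization $\rho=A/c<1$. I would access it through Jagerman's integral representation
\begin{equation}
\frac{1}{E(c,A)} = A\int_0^\infty e^{-At}(1+t)^c\,dt = A\int_0^\infty e^{cf(t)}\,dt,\qquad f(t)=\log(1+t)-\rho t,
\end{equation}
with $A=\rho c$. For $\rho<1$ the exponent $f$ has a unique interior maximizer $t^\star=1/\rho-1>0$ with $f(t^\star)=\rho-1-\log\rho>0$. Setting $g(t)=e^{cf(t)}$, so that $\sup_t g=e^{cf(t^\star)}>1$, the claim is equivalent to $q\int_0^\infty g(t)^q\,dt\ge\int_0^\infty g(t)\,dt$. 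Because ``great $W^u$'' forces $c=\lfloor W^u/\omega_s^u\rfloor\to\infty$, I would evaluate both sides by Laplace's method, $\int_0^\infty g(t)^q\,dt\sim e^{qcf(t^\star)}\sqrt{2\pi/(qc\,|f''(t^\star)|)}$, whence $q\int g^q/\int g\sim\sqrt{q}\,e^{c(q-1)f(t^\star)}\ge1$ since $f(t^\star)>0$ and $q\ge1$. Equivalently $E(c,\rho c)\asymp c^{-1/2}e^{-c\psi(\rho)}$ with decay rate $\psi(\rho)=\rho-1-\log\rho>0$, so the reshaped system---carrying $q$ times as many servers at the same utilization---has the larger exponent $qc\psi(\rho)$ and hence the smaller blocking probability, giving $p_s(\bm\omega^u,\bm d,\bm\lambda,W^u)\ge p_s(\hat{\bm\omega}^u,\hat{\bm d},\bm\lambda,W^u)$.

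I expect the main obstacle to lie in making the loss-system reduction and the decoupling rigorous rather than in the asymptotic estimate itself: identifying the per-server utilization with the stated $\rho_s<1$, justifying that in a shared-band, multi-rate system only slice $s$'s parameters drive the comparison as $W^u\to\infty$, and confirming that Erlang--B insensitivity legitimately replaces the deterministic holding time $d$ (and any residual queueing up to the deadline $D_s$) by its mean. A secondary technical point is to upgrade the Laplace equivalence to a genuine inequality for all sufficiently large $W^u$, which requires uniform control of the Laplace remainder so that the factor $e^{f(t^\star)}>1$ dominates the $\sqrt{q}$ prefactor; this is exactly where the hypotheses $\rho_s<1$ (ensuring $\psi(\rho_s)>0$) and ``great $W^u$'' (ensuring $c$ large) are indispensable. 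Finally I would note that feasibility of the reshaped schedule needs $q\,d_{i,s}\le D_s$, which the integrality assumption on $D_s/d_{i,s}$ secures for admissible $q$, although it is not needed for the inequality itself.
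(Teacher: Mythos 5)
Your proposal is correct in substance and mirrors the paper's overall architecture: both arguments model the URLLC band as a multi-class loss system, argue that in the large-$W^u$ limit the remaining slices form a background that is common to the two configurations, and then reduce the lemma to a single-slice comparison in which narrowing the channel by $q$ multiplies both the number of parallel ``servers'' and the offered load by $q$. The difference is where the rigor is spent. The paper makes the decoupling precise: it writes the product-form stationary distribution $\pi({\bm b}^u, W^u) = G\prod_s \rho_s^{n_s}/n_s!$, shows via the $M/GI/\infty$ limit that the background distribution $\pi({\bm b}_{\backslash s}^u, W^u)$ converges to the same Poisson law under both configurations, and uses PASTA plus the law of total probability to express the blocking probability as $\sum_{{\bm b}_{\backslash s}^u} \varphi({\bm b}_{\backslash s}^u, q, \cdot)\,\hat\pi({\bm b}_{\backslash s}^u, \cdot)$; but it then merely \emph{asserts} the key inequality $\varphi(\cdot, q, \cdot) < \varphi(\cdot, 1, \cdot)$ between ratios of Poisson partial sums. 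You do the opposite: you assert the decoupling (correctly flagging it as the step needing work) and instead prove the single-class comparison, via Jagerman's integral representation and Laplace's method, obtaining $E(c, \rho c) \asymp c^{-1/2}e^{-c\psi(\rho)}$ with $\psi(\rho) = \rho - 1 - \log\rho > 0$, so that $q$-fold scaling of servers and load multiplies the decay exponent by $q$. Your analysis thus supplies exactly the detail the paper omits, and the paper supplies exactly the probabilistic bookkeeping you treat heuristically; combined, the two would constitute a complete proof. One caveat: you identify the paper's $\rho_s$ with the per-server utilization $A_s/c_s$, but in the paper $\rho_s = \lambda_{i,s}d_{i,s}$ is an offered load that does \emph{not} grow with $W^u$, so in the regime of the lemma the utilization tends to $0$ rather than staying fixed; this only strengthens the conclusion (the exponent comparison holds a fortiori), but your fixed-$\rho$ Laplace estimate should be restated for that regime---for instance by comparing the Poisson tails $\rho_s^{N}/N!$ and $(q\rho_s)^{qN}/(qN)!$ directly, as the structure of the paper's $\varphi$ suggests---so that the hypothesis $\rho_s<1$ is invoked in the same role as in the paper.
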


\begin{proof}
Please refer to Appendix A.
\end{proof}

This Lemma shows that a system with narrowed PRBs in the frequency plane not only increases the number of concurrent transmissions of URLLC packets destined to UEs in $s$ but also is beneficial for UEs in other slices.

Therefore, one should scale $d_{i,s}$ with an integer $q$ such that $qd_{i,s} = D_s$. This motivates us to choose $d_{i,s}$ and $\omega_{i,s}^u(t)$ of a URLLC subslice as follows.
\begin{equation}\label{eq:omega_s}
   d_{{i,s}}(t)  = D_s \ {\rm and} \ \omega_{{i,s}}^{u}(t) = \frac{\lambda_{i,s}{r_{{i,s}}^{u}(t)}}{{\kappa D_s}}, {\forall i \in {\cal I}_s^u}, s \in {\cal S}^u
\end{equation}

Besides, considering the reliability requirement of transmitting URLLC packets, the bandwidth $W^{u}$ reserved for all URLLC slices should satisfy a certain condition relating to the blocking probability $\beta$ of a URLLC packet. To this aim, a multi-class extension of the classical square-root staffing rule (see \cite{harchol2013performance} for more details) to correlate $W^u$, $\bm r^u$, $\bm \lambda$, and $\beta$ is exploited. Particularly, to provide communication services for URLLC traffic of $\bm \lambda$ with reliability $\beta$ for a given $\bm r^u$, the mathematical expression of the minimum reserved bandwidth can take the following form \cite{anand2018resource}
\begin{equation}\label{eq:URLLC_bandwidth}
    W^u(t) = \varsigma^{\rm mean}(\bm r^u(t)) + Q^{-1}(\beta)\sqrt{\varsigma^{\rm variance }\bm r^u(t)}
\end{equation}
where $\varsigma^{\rm mean}(\bm r^u(t)) = \sum\nolimits_{s \in {\cal S}^u } {\sum\nolimits_{i \in {\cal I}_s^u} {b_{i,s}^u(t){\lambda _{i,s}}\frac{{{r_{i,s}^u(t)}}}{\kappa }} } $ is the mean of the required system bandwidth, and $\varsigma^{\rm variance }(\bm r^u(t)) =  \sum\nolimits_{s \in {\cal S}^u} {\sum\nolimits_{i \in {\cal I}_s^u} {b_{i,s}^u(t){\lambda _{i,s}}\frac{{r_{i,s}^{2u}(t)}}{{{\kappa ^2}{D_s}}}} } $ is the variance of the required bandwidth.

In (\ref{eq:URLLC_bandwidth}), the reserved bandwidth $W^u(t)$ is related to channel uses of URLLC UEs. We, therefore, discuss how to model channel uses in the following subsection.

\subsubsection{Channel uses of URLLC UEs}
For a URLLC slice $s \in {\cal S}^u$, let $u_{i,s}^u(t)$ be the data symbol destined to URLLC UE $i$ for all $i \in {\cal I}_s^u$ during minislot $t$ with ${\mathbb E}[|u_{i,s}^u(t)|^2] = 1$, and ${\bm g}_{ij,s}(t) \in {\mathbb C}^K$ be the transmit beamformer pointed at $i$ from BS $j$ at $t$. Just like \cite{tang2019service}, an FDMA scheme is applied to $s$ to alleviate inter-subslice interference. The received signal ${\hat u}_{i,s}^u(t)$ at UE $i$ in $s$ during minislot $t$ can then take the following form
\begin{equation}\label{eq:eMBB_receiving_signal}
    {\hat u}_{i,s}^u(t) = \sum\limits_{j \in {\cal J}} {{\bm h}_{ij,s}^{\rm H}}{{(t)}}{{\bm g}_{ij,s}(t)u_{i,s}^u(t)}  + {\delta _{i,s}}(t), \forall i \in {\cal I}_s^u, s \in {\cal S}^u
\end{equation}
where the first term on the RHS denotes the desired signal for UE $i$. The corresponding SNR received at $i$ in $s$ over minislot $t$ can be expressed as
\begin{equation}\label{eq:URLLC_snr}
    SNR_{i,s}^u(t) = \frac{{|\sum\nolimits_{j \in {\cal J}} {{{\bm h}_{ij,s}^{\rm H}}{{(t)}}{{\bm g}_{ij,s}}(t)} {|^2}}}{{\phi \sigma _{i,s}^2}}, \forall i \in {\cal I}_s^u, s \in {\cal S}^u
\end{equation}

Owing to the stringent low latency requirement, the length of a URLLC packet is typically very short. As a result, the achievable data rate and error probability of packet transmission cannot be effectively captured by Shannon's capacity formula. Instead, the receiving data rates of URLLC packets may fall into a finite blocklength channel coding regime, which are derived in \cite{yang2014quasi}. Mathematically, in an AWGN channel, the number of information bits $L_{i,s}^u(t)$ for all $i \in {\cal I}_s^u$ and $s \in {\cal S}^u$ at $t$ that is transmitted with a codeword decoding error probability of $\alpha$ and $r$ channel uses can be accurately approximated by
\begin{equation}\label{eq:URLLC_bit_length}
    \begin{array}{l}
L_{i,s}^u(t) \approx r_{i,s}^u(t)C(SNR_{i,s}^u(t)) - \\
\qquad {Q^{ - 1}}(\alpha )\sqrt {r_{i,s}^u(t)V(SNR_{i,s}^u(t))}, \forall i \in {\cal I}_s^u, s \in {\cal S}^u
\end{array}
\end{equation}
where $C(SNR_{i,s}^u(t)) = \log_2(1 + SNR_{i,s}^u(t) )$ is the AWGN channel capacity per Hz under the infinite blocklength assumption, $V(SNR_{i,s}^u(t)) = \ln^2 2\left( {1 - \frac{1}{{{{(1 + SNR_{i,s}^u(t))}^2}}}} \right)$ denotes the channel dispersion.

The expression of $L_{i,s}^u(t)$ is complicate, which significantly hinders the theoretical derivation of the optimization problem formulated in the following section. To handle this issue, we approximate $L_{i,s}^u(t)$ in two cases:
\begin{itemize}
\item \emph{Case I: enforced SNR constraint.} When the received SNR at a URLLC UE is not less than 5 dB, which is easily achieved in CoMP transmission networks (especially when supporting URLLC), $V(SNR_{i,s}^u(t))$ can be accurately approximated as $\ln^2 2$ \cite{schiessl2015delay}.
\item \emph{Case II: relaxed SNR constraint.} A key observation is that $V(SNR_{i,s}^u(t)) < \ln^2 2$ when the received SNR is smaller than 5 dB. Thus, through substituting $V(SNR_{i,s}^u(t)) = \ln^2 2$ into (\ref{eq:URLLC_bit_length}), we can achieve the lower bound of $L_{i,s}^u(t)$. If the lower-bounded value is applied to optimize the allocation of resources, the bandwidth constraint depicted in the next section can be satisfied.
\end{itemize}

Accordingly, we can further approximate $L_{i,s}^u(t)$ as $r_{i,s}^u(t)C(SNR_{i,s}^u(t)) - {Q^{ - 1}}(\alpha )\sqrt {r_{i,s}^u(t)}$. With this approximated $L_{i,s}^u(t)$, we can write $r_{i,s}^u(t)$ as a function of $\alpha$ with
\begin{equation}\label{eq:URLLC_channel_use}
    \begin{array}{l}
r_{i,s}^u(t) = \frac{{L_{i,s}^u(t)}}{{C(SNR_{i,s}^u(t))}} + \frac{{{{{Q^{ - 2}}(\alpha)}}}}{{2{{(C(SNR_{i,s}^u(t)))}^2}}} + \\
 \ \frac{{{{{Q^{ - 2}}(\alpha)}}}}{{2{{(C(SNR_{i,s}^u(t)))}^2}}}\sqrt {1 + \frac{{4L_{i,s}^u(t)C(SNR_{i,s}^u(t))}}{{{{{Q^{ - 2}}(\alpha)}}}}}, \forall i \in {\cal I}_s^u, s \in {\cal S}^u
\end{array}
\end{equation}

\begin{proof}
If we substitute $\sqrt{r_{i,s}^u(t)} = x$, then (\ref{eq:URLLC_bit_length}) is a quadratic equation in $x$. Solving it we can achieve the closed-form expression for $r_{i,s}^u(t)$ in (\ref{eq:URLLC_channel_use}).
\end{proof}

\section{Problem formulation}
Based on the above system model, this section aims to formulate the problem of CoMP-enabled RAN slicing for multicast eMBB and bursty URLLC multiplexing service provision.

\subsection{Inter-Slice Constraints}
As the CoMP transmission mode is applied to the multicast eMBB and bursty URLLC multiplexing service provision, and each BS has the maximum transmit power $E_j$ for all $j \in {\cal J}$, we can write the power consumption constraint of each BS as
\begin{equation}\label{eq:RRH_energy}
   \sum\limits_{s \in {{\cal S}^e}} {b_s^e(\bar t){{\bm v}_{j,s}^{\rm H}}{{(t)}}{{\bm v}_{j,s}}(t)}  + \sum\limits_{s \in {\cal S}^u} {\sum\limits_{i \in {\cal I}_s^u} {b_{i,s}^u(t){{\bm g}_{ij,s}^{\rm H}}{{(t)}}{{\bm g}_{ij,s}}(t)} }  \le {E_j}
\end{equation}
where the first term on the left-hand side (LHS) denotes the power consumption of the $j$-th BS for multicasting signals to eMBB UEs at minislot $t$, the second term on the LHS represents the power consumption of the $j$-th BS for unicasting signals to URLLC UEs at minislot $t$.

Since the multicast service for eMBB UEs is considered, and bandwidths allocated to eMBB slices and bursty URLLC slices are orthogonal, the system bandwidth constraint can be given by
\begin{equation}\label{eq:total_bandwidth}
   \sum\limits_{s \in {{\cal S}^e}} {b_s^e(\bar t)\omega_s^e(\bar t)}  + W^u(t)  \le W
\end{equation}
where $W$ denotes the maximum system bandwidth.

\subsection{Utility Function Design}
The goal of the SDRAN-C in the system is to maximize the achieved total utility in a period of time, which is composed of the achieved utility for multicast eMBB service provision and the utility for bursty URLLC service provision. In this paper, we leverage the energy efficiency that is popularly exploited in resource allocation problems to model the achieved utility.

As each time slot is independent of each other over the whole time slots, rather than modelling the achieved utility over the whole time slots, we study the achieved utility during a randomly selected time slot $\bar t$. Besides, during the slot $\bar t$ consisting of $T$ minislots, channel coefficients followed by the beamforming and SNR may vary over minislots; thus, time-varying utility functions with regard to channel coefficients, beamforming, and SNR should be involved in the utility function design. The following two definitions present the expression of multicast eMBB slice utility and bursty URLLC slice utility, respectively.

\begin{mydef}
For any multicast eMBB slice $s \in {\cal S}^e$, the eMBB utility is defined as the energy efficiency of the RAN slicing system for serving $s$ during the time slot $\bar t$, which is expressed as
\begin{equation}\label{eq:eMBB_revenue}
   \begin{array}{l}
\tilde U_s^e = \sum\limits_{t = 1}^T {U_s^e({{\bm v}_s}(t))} \\
\quad \text{ } = \sum\limits_{t = 1}^T {\sum\limits_{i \in {\cal I}_s^e} {b_s^e(\bar t)\ln \left ( 1 + SNR_{i,s}^e(t) \right )} }  + \\
\qquad \eta \sum\limits_{t = 1}^T { \sum\limits_{j \in {\cal J}} {\left [ E_j - b_s^e(\bar t){\bm v}_{j,s}^{\rm{H}}(t){{\bm v}_{j,s}}(t)\right ]} }, \forall s \in {\cal S}^e
\end{array}
\end{equation}
where ${\bm v}_s(t) = [{\bm v}_{1,s}(t); \ldots; {\bm v}_{J,s}(t)] \in {\mathbb C}^{JK \times 1}$, $\eta$ is a constant energy efficiency coefficient.
\end{mydef}

In this definition, a large SNR may lead to a great achievable data rate. The first term on the RHS can then be considered as the \emph{system profit} for eMBB service provision. The power consumption can be regarded as the \emph{system cost}, and the second term (not multiplied by $\eta$) can be interpreted as the \emph{power balance}.

For URLLC UEs, a high SNR regime may not only improve the approximation accuracy but also reduce the amount of channel uses. Therefore, the system profit for URLLC service provision can be modelled as the summation of SNRs received by all URLLC UEs. Besides, considering that URLLC UEs with more stringent low latency requirements have the priority to be scheduled, latency requirements of URLLC UEs should be involved in the design of the URLLC slice utility function.

\begin{mydef}
For any bursty URLLC slice $s \in {\cal S}^u$, the URLLC utility is defined as the energy efficiency of the RAN slicing system for serving $s$ during the time slot $\bar t$, which is expressed as
\begin{align}\label{eq:URLLC_revenue}
& \tilde U_s^u = \sum\limits_{t = 1}^T {\sum\limits_{i \in {\cal I}_s^u}U_s^u({\bm g_{i,s}}(t))} \nonumber \\
&  \quad \text{ } = \sum\limits_{t = 1}^T {\sum\limits_{i \in {\cal I}_s^u} {\left[ {b_{i,s}^u(t)\ln \left (1 + SNR_{i,s}^u(t) \right ) + }  {\frac{\tilde a{b_{i,s}^u(t)}}{{1 - {e^{ - {D_s}}}}}} \right]}} + \nonumber \allowdisplaybreaks[4] \\
&  \eta \sum\limits_{t=1}^T {\sum\limits_{j \in {\cal J}} \left ({E_j - \sum\limits_{i \in {\cal I}_s^u} {b_{i,s}^u(t)\bm g_{ij,s}^{\rm{H}}(t){\bm g_{ij,s}}(t)} }\right ) }, \forall s \in {\cal S}^u
\end{align}
where ${\bm g}_{i,s}(t) = [{\bm g}_{i1,s}(t);\ldots;{\bm g}_{iJ,s}(t)] \in {\mathbb C}^{JK \times 1}$, $\tilde a$ is a constant.
\end{mydef}

\subsection{Formulated Problem}
Based on the above system models, constraints, and designed utility functions, the RAN slicing problem with a goal of maximizing the total eMBB and URLLC slice utilities during the time slot $\bar t$ can be given by
\begin{subequations}\label{eq:original_problem}
\begin{alignat}{2}
& P0: \mathop {{\rm{maximize}}}\limits_{\scriptstyle \ \; \{ b_{i,s}^u(t), b_s^e(\bar t),\hfill\atop
\scriptstyle  \omega _s^e(\bar t), {\bm v_s}(t), {\bm g_{i,s}}(t)\} \hfill}  \sum\limits_{s \in {{\cal S}^e}} {\tilde U_s^e}  + {\hat \rho} \sum\limits_{s \in {{\cal S}^u}} {\tilde U_s^u} \\
& {\rm subject \text{ } to:} \nonumber \\
& b_s^e(\bar t) \in \{ 0,1\}, \forall s \in {\cal S}^e  \\
& b_{i,s}^u(t) \in \{ 0,1\}, \forall i \in {\cal I}_s^u, s \in {\cal S}^u \\
& SNR_{i,s}^u(t) \ge 5, \forall i \in {\cal I}_s^u, s \in {\cal S}^u \\
& \rm {constraints \text{ } (\ref{eq:eMBB_QoS}),(\ref{eq:RRH_energy}),(\ref{eq:total_bandwidth}) \text{ } are \text{ } satisfied.}
\end{alignat}
\end{subequations}
where $\hat \rho$ is a weight coefficient representing the scheduling priority of inter-slices.

The mitigation of (\ref{eq:original_problem}) is highly challenging. In (\ref{eq:original_problem}), there are two types of variables, i.e., mini-timescale variables $\{b_{i,s}^u(t),{\bm v}_s(t), {\bm g}_{i,s}(t)\}$ and timescale variables $\{b_s^e(\bar t), \omega_s^e(\bar t)\}$, which should be optimized at two different timescales. For $\{b_{i,s}^u(t),{\bm v}_s(t), {\bm g}_{i,s}(t)\}$, they should be optimized at the beginning of each minislot while $\{b_s^e(\bar t), \omega_s^e(\bar t)\}$ should be determined at the beginning of each time slot. This requirement makes (\ref{eq:original_problem}) quite different from some other sequential optimization problems; and thus, some optimization methods cannot be directly applied to mitigate the problem. Additionally, the solution of $\{b_s^e(\bar t), \omega_s^e(\bar t)\}$ needs the acquisition of channel coefficients $\{\bm h_{ij,s}(t)\}$ during the time slot $\bar t$, which may be impossible at the beginning of $\bar t$. A possible proposal of mitigating this difficult problem is to transform it into single timescale problems. After that, some optimization methods can be developed to mitigate single timescale problems. Next, we will discuss how to transform the multi-timescale problem into single timescale problems.

\section{Problem Transformation}
Recall the i.i.d. characteristic of channel coefficients, the objective function of (\ref{eq:original_problem}) (divided by $T$) can be approximated as $\frac{1}{T}\sum\limits_{s \in {{\cal S}^e}} {\tilde U_s^e}  + \frac{1}{T}{\hat \rho} \sum\limits_{s \in {{\cal S}^u}} {\tilde U_s^u} = {\mathbb E}_{\bm {\hat h}}\left [\sum\limits_{s \in {\cal S}^e} \hat U_s^e({\bm {\hat v}}_s) + \hat \rho \sum\limits_{s \in {\cal S}^u} {\sum\limits_{i \in {\cal I}_s^u} \hat U_s^u({\bm {\hat g}}_{i,s})} \right ]$, where ${\bm {\hat h}}$ includes all random channels, and ${\bm {\hat v}}_s$ for all $s \in {\cal S}^e$ and ${\bm {\hat g}}_{i,s}$ for all $i \in {\cal I}_s^u$ and $s\in {\cal S}^u$ are beamformers corresponding to ${\bm {\hat h}}$. Then, by exploiting the SAA technique \cite{kim2015guide}, we can further approximate the expectation of the objective function of (\ref{eq:original_problem}) as its sample average (multiplied by $M$), i.e., $\sum\limits_{s \in {\cal S}^e}{\sum\limits_{m = 1}^M{U_s^e({\bm v}_{sm})}} + \hat \rho \sum\limits_{s \in {\cal S}^u}{\sum\limits_{m =1 }^M{\sum\limits_{i \in {\cal I}_s^u}U_s^u({\bm g}_{i,sm})}}$, where ${\bm v }_{sm}$ and ${\bm g}_{i,sm}$ represent the generated beamformers based on the $m$-th channel coefficient. The convergence of approximating the expectation of a function as its sample average by exploring the SAA technique had been rigorously proven in \cite{kim2015guide}.

Besides, for the probabilistic QoS constraint (\ref{eq:eMBB_QoS}), if the number of samples $M$ is no less than $M^{\star}$ with \cite{liu2016sample,so2013distributionally}
\begin{equation}\label{eq:M_star}
\begin{array}{l}
M^{\star} = \left\lceil \frac{1}{\varepsilon }\left( (N^e + N^u)JK - 1 + \log \frac{1}{\theta } + \right . \right. \\
\qquad \left . \left. \sqrt {2((N^e + N^u)JK - 1)\log \frac{1}{\theta } + {{\log }^2}\frac{1}{\theta }}  \right) \right\rceil
\end{array}
\end{equation}
for any $\theta \in (0, 1)$, then any solution to
\begin{equation}
    \gamma _{i,sm}^e \ge b_{sm}^eR_s, \forall i \in {\cal I}_s^e, s \in {\cal S}^e, m \in {\cal M}:=\{1,\ldots,M\}
\end{equation}
may satisfy (\ref{eq:eMBB_QoS}) with a probability at least $1 - \theta$.

For all $m \in {\cal M}$, if we consider timescale variables $\{\omega_s^e(\bar t), b_s^e(\bar t)\}$ as mini-timescale variables $\{\omega_{sm}^e, b_{sm}^e\}$, then the problem (\ref{eq:original_problem}) can be transformed into $M$ independent single timescale problems that may be mitigated by some optimization methods. Based on the obtained $\{\omega_{sm}^e, b_{sm}^e\}$, an ADMM method \cite{boyd2011distributed} can be exploited to restore $\{\omega_{sm}^e, b_{sm}^e\}$ to $\{\omega_s^e(\bar t), b_s^e(\bar t)\}$.

Next, define ${\bm V}_{sm} = {\bm v}_{sm} {\bm v}_{sm}^{\rm H} \in {\mathbb R}^{JK \times JK}$ for all $s \in {\cal S}^e$, ${\bm G}_{i,sm} = {\bm g}_{i,sm} {\bm g}_{i,sm}^{\rm H} \in {\mathbb R}^{JK \times JK}$, ${\bm H}_{i,sm} = {\bm h}_{i,sm} {\bm h}_{i,sm}^{\rm H} \in {\mathbb R}^{JK \times JK}$ for all $i \in {\cal I}_s^u$ and $s \in {\cal S}^u$, where ${\bm h}_{i,sm} = [{\bm h}_{i1,sm};\ldots;{\bm h}_{iJ,sm}] \in {\mathbb C}^{JK \times 1}$. As ${\rm tr}({\bm {AB}}) = {\rm tr}({\bm {BA}})$ for matrices $\bm A$, $\bm B$ of compatible dimensions, the signal power received at eMBB UE $i$ in $s$ can be expressed as ${\left| {\sum\nolimits_{j \in {\cal J}} {{\bm h}_{ij,sm}^{\rm H}{{\bm v}_{j,sm}}} } \right|^2} = {\left| {{\bm h}_{i,sm}^{\rm H}{{\bm v}_{sm}}} \right|^2} = {\left( {{\bm h}_{i,sm}^{\rm H}{{\bm v}_{sm}}} \right)^{\rm H}}{\bm h}_{i,sm}^{\rm H}{{\bm v}_{sm}} = {\rm tr}({\bm v}_{sm}^{\rm H}{{\bm h}_{i,sm}}{\bm h}_{i,sm}^{\rm H}{{\bm v}_{sm}}) = {\rm tr}({{\bm h}_{i,sm}}{\bm h}_{i,sm}^{\rm H}{{\bm v}_{sm}}{\bm v}_{sm}^{\rm H}) = {\rm tr}({{\bm H}_{i,sm}}{{\bm V}_{sm}})$. Likewise, the signal power received at URLLC UE $i$ in $s$ can be expressed as ${\rm tr}({\bm H}_{i,sm}{\bm G}_{i,sm})$.
The power of each beamforming vector for serving URLLC UEs and eMBB UEs can be written as $\sum\nolimits_{j \in {\cal J}} {{\bm g}_{ij,sm}^{\rm H}{{\bm g}_{ij,sm}}}  = {\bm g}_{i,sm}^{\rm H}{{\bm g}_{i,sm}} = {\rm tr}({\bm g}_{i,sm}^{\rm H}{{\bm g}_{i,sm}}) = {\rm tr}({{\bm g}_{i,sm}}{\bm g}_{i,sm}^{\rm H}) = {\rm tr}({{\bm G}_{i,sm}})$ and $\sum\nolimits_{j \in {\cal J}} {{\bm v}_{j,sm}^{\rm H}{{\bm v}_{j,sm}}}  = {\bm v}_{sm}^{\rm H}{{\bm v}_{sm}} = {\rm tr}({\bm v}_{sm}^{\rm H}{{\bm v}_{sm}}) = {\rm tr}({{\bm v}_{sm}}{\bm v}_{sm}^{\rm H}) = {\rm tr}({{\bm V}_{sm}})$, respectively. Besides, by applying the following property
\begin{equation}\label{eq:matrix_operation}
    \left\{ {\begin{array}{*{20}{l}}
{{\bm V_{sm}} = {\bm v_{sm}}{\bm v}_{sm}^{\rm H} \Leftrightarrow {\bm V_{sm}} \succeq 0,}&{{\rm rank}({\bm V_{sm}}) \le 1}\\
{{{\bm G}_{i,sm}} = {{\bm g}_{i,sm}}{\bm g}_{i,sm}^{\rm H} \Leftrightarrow {{\bm G}_{i,sm}} \succeq 0,}&{{\rm rank}({{\bm G}_{i,sm}}) \le 1}
\end{array}} \right.
\end{equation}
the $m$-th single timescale problem can be formulated as
\begin{subequations}\label{eq:transformed_problem}
\begin{alignat}{2}
& P1:\mathop {{\rm{maximize}}}\limits_{\scriptstyle \ \{ b_{i,sm}^u, b_{sm}^e, \hfill\atop
\scriptstyle  \omega _{sm}^e, {\bm V_{sm}}, {\bm G_{i,sm}}\} \hfill}
\sum\limits_{s \in {{\cal S}^e}} {U_s^e({\bm V_{sm}})}  + \hat \rho \sum\limits_{s \in {{\cal S}^u}} {\sum\limits_{i \in {\cal I}_s^u}{U_s^u({\bm G_{i,sm}})} }  \\
& {\rm subject \text{ } to:} \nonumber \allowdisplaybreaks[4] \\
& \omega _{sm}^e{\log _2}\left( {1 + \frac{{{\rm tr}({{\bm H}_{i,sm}}{\bm V_{sm}})}}{{\phi \sigma _{i,s}^2}}} \right) \ge b_{sm}^eR_s,\forall i \in {\cal I}_s^e, s \in {\cal S}^e \\
& \sum\limits_{s \in {{\cal S}^e}} {b_{sm}^e{\rm tr}({{\bm Z}_j{\bm V}_{sm}})}  + \sum\limits_{s \in {{\cal S}^u}} {\sum\limits_{i \in {\cal I}_s^u} {b_{i,sm}^u{\rm tr}({{\bm Z}_j}{{\bm G}_{i,sm}})} }  \le {E_j},\forall j \\
& \frac{{\rm tr}({\bm H}_{i,sm}{\bm G}_{i,sm})}{\phi \sigma_{i,s}^2} \ge 5b_{i,sm}^u, \forall i \in {\cal I}_s^u, s \in {\cal S}^u  \\
& {\bm V_{sm}} \succeq 0, \forall s \in {\cal S}^e \\
& {{\bm G}_{i,sm}} \succeq 0, \forall i \in {\cal I}_s^u, s \in {\cal S}^u  \\
& {{\rm rank}({\bm V_{sm}}) \le 1}, \forall s \in {\cal S}^e  \\
& {{\rm rank}({{\bm G}_{i,sm}}) \le 1}, \forall i \in {\cal I}_s^u, s \in {\cal S}^u \\
&{\sum\limits_{s \in {{\cal S}^e}} {b_{sm}^e\omega _{sm}^e}  + {W_m^u} \le W} \\
&{b_{sm}^e \in \{ 0,1\} ,\forall s \in {{\cal S}^e}}\\
&{b_{i,sm}^u \in \{ 0,1\} ,\forall i \in {\cal I}_s^u}, s \in {{\cal S}^u}
\end{alignat}
\end{subequations}
where ${\bm Z}_j$ is a square matrix with $J \times J$ blocks, and each block in ${\bm Z}_j$ is a $K \times K$ matrix. In ${\bm Z}_j$, the block in the $j$-th row and $j$-th column is a $K \times K$ identity matrix, and all other blocks are zero matrices.

Although (\ref{eq:transformed_problem}) is a single timescale problem, it is still difficult to mitigate it. First, (\ref{eq:transformed_problem}) simultaneously consists of continuous variables, zero-one variables, and positive semidefinite variables; thus, it is an MISDP problem. Second, the low-rank constraints (\ref{eq:transformed_problem}g) and (\ref{eq:transformed_problem}h) are non-convex. The multiplication of zero-one variables and continuous variables also makes the energy and bandwidth constraints (\ref{eq:transformed_problem}c), (\ref{eq:transformed_problem}i) non-convex. We next discuss how to mitigate this challenging problem.

\section{Problem solution with system generated channels}
In this section, we first try to tackle the non-convex low-rank constraints by exploiting the semidefinite relaxation (SDR) method and then propose a proposal of alternative optimization to mitigate the relaxed problem.

\subsection{Semidefinite Relaxation}
We resort to the SDR method to handling the non-convex low-rank constraints. By directly dropping the low-rank constraints (\ref{eq:transformed_problem}f) and (\ref{eq:transformed_problem}g), we arrive at the following relaxed problem of (\ref{eq:transformed_problem}).
\begin{subequations}\label{eq:relaxed_transformation_problem}
\begin{alignat}{2}
& \mathop {{\rm{maximize}}}\limits_{\scriptstyle \ \{ b_{i,sm}^u, b_{sm}^e, \hfill\atop
\scriptstyle \omega _{sm}^e, {\bm V_{sm}}, {\bm G_{i,sm}}\} \hfill}
\sum\limits_{s \in {{\cal S}^e}} {U_s^e({\bm V_{sm}})}  + \hat \rho \sum\limits_{s \in {{\cal S}^u}} {\sum\limits_{i \in {\cal I}_s^u}{U_s^u({\bm G_{i,sm}})} }  \\
& {\rm subject \text{ } to:} \nonumber \\
& \rm {constraints \text{ } (\ref{eq:transformed_problem}b)-(\ref{eq:transformed_problem}f), (\ref{eq:transformed_problem}i)-(\ref{eq:transformed_problem}k) \text{ } are \text{ } satisfied.}
\end{alignat}
\end{subequations}

Owing to the relaxation, the power matrices $\{{\bm V}_{sm}, {\bm G}_{i,sm}\}$ obtained by mitigating (\ref{eq:relaxed_transformation_problem}) will not meet the low-rank constraints in general. This is because the (convex) feasible set of (\ref{eq:relaxed_transformation_problem}) is a superset of the (nonconvex) feasible set of (\ref{eq:transformed_problem}).
However, if they are, then the eigenvectors will be the optimal solution to (\ref{eq:relaxed_transformation_problem}) and the SDR for ${\bm V}_{sm}$ for all $s \in {\cal S}^e$ and ${\bm G}_{i,sm}$ for all $i \in {\cal I}_s^u$ and $s \in {\cal S}^u$ is tight. If they are not, then the upper bound of the objective function with respect to (w.r.t) the power required by the low-rank transmit beamforming scheme can be obtained. Besides, in this case, we must leverage some methods such as the \emph{randomization/scale} method \cite{ma2010semidefinite} to extract the approximate solution from them.

\subsection{Alternative Optimization}
Considering that the non-convexity of bandwidth and energy constraints is mainly caused by the multiplication of zero-one variables and continuous variables, we propose to decouple these two types of variables and attempt to optimize them alternatively.

\subsubsection{Optimization of zero-one variables}
Given the continuous variables $\{\omega_{sm}^e, {\bm V}_{sm}, {\bm G}_{i,sm}\}$, (\ref{eq:relaxed_transformation_problem}) is reduced to a non-linear integer programming problem due to the existence of the non-linear constraint (\ref{eq:transformed_problem}i) w.r.t $b_{i,sm}^u$ for all $i \in {\cal I}_s^u$ and $s \in {\cal S}^u$. The non-linear constraint significantly increases the difficulty of problem mitigation. Therefore, we try to optimize $b_{sm}^e$ for all $s \in {\cal S}^e$ and $ b_{i,sm}^u$ separately.

\emph{a) Enforcement of eMBB slice requests:}
Given the variables $\{b_{i,sm}^u, \omega_{sm}^e, {\bm V}_{sm}, {\bm G}_{i,sm}\}$, (\ref{eq:relaxed_transformation_problem}) will be reduced to the following problem
\begin{subequations}\label{eq:eMBB_slice_enforcement_problem}
\begin{alignat}{2}
& \mathop {{\rm{maximize}}}\limits_{\{b_{sm}^e\}}
\sum\limits_{s \in {{\cal S}^e}} {U_s^e({\bm V}_{sm})}  \\
& {\rm subject \text{ } to:} \nonumber \\
& \rm {constraints \text{ }  (\ref{eq:transformed_problem}b), (\ref{eq:transformed_problem}c), (\ref{eq:transformed_problem}i), (\ref{eq:transformed_problem}j) \text{ } are \text{ } satisfied.}
\end{alignat}
\end{subequations}

Since all constraints and the objective function in (\ref{eq:eMBB_slice_enforcement_problem}) are linear w.r.t $b_{sm}^e$, (\ref{eq:eMBB_slice_enforcement_problem}) is a linear integer programming problem that can be effectively mitigated by leveraging some standard optimization tools such as MOSEK \cite{MOSEK}.

\emph{b) Optimization of resource mask: }
Given the variables $\{b_{sm}^e, \omega_{sm}^e, {\bm V}_{sm}, {\bm G}_{i,sm}\}$, (\ref{eq:relaxed_transformation_problem}) can be reduced to the following problem
\begin{subequations}\label{eq:resource_mask_problem}
\begin{alignat}{2}
& P2:  \mathop {{\rm{maximize}}}\limits_{\{ b_{i,sm}^u\} } \hat \rho \sum\limits_{s \in {{\cal S}^u}} {\sum\limits_{i \in {\cal I}_s^u} {U_s^u({\bm G}_{i,sm})} } \\
& {\rm subject \text{ } to:} \nonumber \\
& \rm {constraints \text{ } (\ref{eq:transformed_problem}c),(\ref{eq:transformed_problem}d), (\ref{eq:transformed_problem}i), (\ref{eq:transformed_problem}k) \text{ } are \text{ } satisfied.}
\end{alignat}
\end{subequations}

(\ref{eq:resource_mask_problem}) is a non-linear integer programming problem that is hard to be alleviated. In theory, the exhaustive search method can be exploited to obtain the optimal solution to (\ref{eq:resource_mask_problem}). However, the computational complexity of the exhaustive search method is of exponential order.
To reduce the computational complexity, we propose to achieve the resource mask result heuristically. Algorithm 1 depicts the detailed steps of masking resources, the fundamental principle of which is to preferentially allocate resources to URLLC UEs maximizing the URLLC slice utility.
\begin{algorithm}
\caption{Greedy resource mask algorithm, GRM}
\label{alg1}
\begin{algorithmic}[1]
\STATE \textbf{Initialization:} Calculate $U_{s}^u({\bm G}_{i,sm})$ for each $i \in {\cal I}_s^u$, $s \in {\cal S}^u$. Set the vector ${\bm b}^u = {\bm 1}$ and the vector ${ {\bar {\bm b}^u}} = {\bm 0}$.
\WHILE{$\sum\nolimits_{s\in {\cal S}^u} {\sum\nolimits_{i\in {\cal I}_s^u} {\bar b_{i,s}^u} }  \ge 1$}
\STATE $[s^{\star}, i^{\star}] = \arg \mathop {\max }\limits_{i \in {\cal I}_s^u, s \in {\cal S}^u} U_s^u({\bm G}_{i,sm})$
\STATE Check the feasibility of (\ref{eq:transformed_problem}) if ${b_{i^{\star},s^{\star}}^u} = 1$.
\IF {feasible}
\STATE Let ${b_{i^{\star},s^{\star}}^u} = 1$ and ${ {\bar b_{i^{\star},s^{\star}}^u}} = 0$.
\ELSE
\STATE Let ${{\bar b_{i^{\star},s^{\star}}^u}} = 0$.
\ENDIF
\ENDWHILE
\end{algorithmic}
\end{algorithm}

\subsubsection{Optimization of beamforming and bandwidth}
Given the zero-one variables $\{b_{sm}^e, b_{i,sm}^u\}$, the bandwidth and beamforming optimization problem can be formulated as
\begin{subequations}\label{eq:iterative_bandwidth_n_beamforming}
\begin{alignat}{2}
& \mathop {{\rm{maximize}}}\limits_{\{ \omega _{sm}^e, {\bm V_{sm}},{\bm G_{i,sm}}\}}
\sum\limits_{s \in {{\cal S}^e}} {U_s^e({\bm V}_{sm})}  + \hat \rho \sum\limits_{s \in {{\cal S}^u}} {\sum\limits_{i \in {\cal I}_s^u} {U_s^u({\bm G}_{i,sm})} }  \\
& {\rm subject \text{ } to:} \nonumber \\
& \rm {constraints \text{ } (\ref{eq:transformed_problem}b)-(\ref{eq:transformed_problem}f), (\ref{eq:transformed_problem}i) \text{ } are \text{ } satisfied.}
\end{alignat}
\end{subequations}

There are semidefinite matrices and complicate constraints in (\ref{eq:iterative_bandwidth_n_beamforming}), which make it difficult to be optimized. The following Corollary shows how to transform the challenging problem into a standard convex problem such that some convex optimization tools can be exploited to mitigate the problem.
\begin{corollary}\label{corollary_porb}
For all $i \in {\cal I}_s^u$ and $s \in {\cal S}^u$, the probem (\ref{eq:iterative_bandwidth_n_beamforming}) can be equivalently transformed into a standard convex semidefinite programming (SDP) problem via introducing a family of slack variables $\bm f_m = \{f_{i,sm}^u\}$.
\end{corollary}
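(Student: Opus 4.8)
The plan is to first make explicit every nonconvex ingredient of (\ref{eq:iterative_bandwidth_n_beamforming}) after substituting the utility definitions (\ref{eq:eMBB_revenue}), (\ref{eq:URLLC_revenue}), the reserved-bandwidth expression (\ref{eq:URLLC_bandwidth}), and the finite-blocklength channel-use map (\ref{eq:URLLC_channel_use}). With the binary variables $\{b_{sm}^e,b_{i,sm}^u\}$ frozen, the objective decomposes into terms $\ln(1+{\rm tr}({\bm H}_{i,sm}{\bm V}_{sm})/(\phi\sigma_{i,s}^2))$ and $\ln(1+{\rm tr}({\bm H}_{i,sm}{\bm G}_{i,sm})/(\phi\sigma_{i,s}^2))$ together with the linear power terms $-\eta\,{\rm tr}({\bm Z}_j{\bm V}_{sm})$ and $-\eta\,{\rm tr}({\bm Z}_j{\bm G}_{i,sm})$; since the SNR arguments are affine in the Hermitian matrix variables and $\ln(1+\cdot)$ is concave and increasing, each term is concave, so the objective is harmless for maximization. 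The power constraint (\ref{eq:transformed_problem}c), the enforced-SNR constraint (\ref{eq:transformed_problem}d) and the semidefiniteness constraints (\ref{eq:transformed_problem}e)--(\ref{eq:transformed_problem}f) are all affine or LMI once the $b$'s are fixed. Hence only two constraints resist convexity: the eMBB rate constraint (\ref{eq:transformed_problem}b), containing the bilinear product $\omega_{sm}^e\log_2(1+SNR_{i,sm}^e)$, and the system-bandwidth constraint (\ref{eq:transformed_problem}i), whose term $W_m^u$ couples the URLLC beamformers to the budget through the channel uses $r_{i,sm}^u$ of (\ref{eq:URLLC_channel_use}).

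I would dispose of (\ref{eq:transformed_problem}b) without adding variables, rewriting it on the branch $b_{sm}^e=1$ as $\omega_{sm}^e\ge R_s/\log_2(1+SNR_{i,sm}^e)$ (the branch $b_{sm}^e=0$ being vacuous). Because $\log_2(1+SNR_{i,sm}^e)$ is concave and positive in ${\bm V}_{sm}$, its reciprocal is convex, so the constraint cuts out a convex set. The substantive work, and the reason the corollary introduces the family ${\bm f}_m=\{f_{i,sm}^u\}$, is the bandwidth constraint. I would let each slack $f_{i,sm}^u$ stand for an upper bound on the channel use $r_{i,sm}^u$ and replace every occurrence of $r_{i,sm}^u$ in $\varsigma^{\rm mean}$ and $\varsigma^{\rm variance}$ of (\ref{eq:URLLC_bandwidth}) by $f_{i,sm}^u$. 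Then $\varsigma^{\rm mean}$ becomes affine in ${\bm f}_m$, $\varsigma^{\rm variance}$ becomes a nonnegative weighted sum of the squares $(f_{i,sm}^u)^2$, hence convex, and $\sqrt{\varsigma^{\rm variance}}$ is a weighted Euclidean norm of ${\bm f}_m$; introducing one scalar plus a Schur-complement (second-order-cone) inequality turns $W_m^u=\varsigma^{\rm mean}+Q^{-1}(\beta)\sqrt{\varsigma^{\rm variance}}$ into a convex, conic-representable quantity, so (\ref{eq:transformed_problem}i) becomes convex in $(\omega_{sm}^e,{\bm f}_m)$.

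What remains is to couple each slack back to its beamformer by a convex link and to verify the surrogate is exact. The link is the ``enough channel uses'' requirement extracted from (\ref{eq:URLLC_bit_length})--(\ref{eq:URLLC_channel_use}): $f_{i,sm}^u\ge r_{i,sm}^u$, where by (\ref{eq:URLLC_channel_use}) the required channel use is a decreasing function of $C(SNR_{i,sm}^u)=\log_2(1+{\rm tr}({\bm H}_{i,sm}{\bm G}_{i,sm})/(\phi\sigma_{i,s}^2))$. Exactness follows from a monotonicity/epigraph argument: ${\bm f}_m$ appears only in (\ref{eq:transformed_problem}i), $W_m^u$ is strictly increasing in each $f_{i,sm}^u$, and (\ref{eq:transformed_problem}i) is an upper-bound constraint, so every $f_{i,sm}^u$ is driven down to its smallest feasible value $r_{i,sm}^u$ at optimality; thus the relaxed problem and (\ref{eq:iterative_bandwidth_n_beamforming}) share optimal value and optimizers. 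For convexity of the link I would show that the map $r(\cdot)$ in (\ref{eq:URLLC_channel_use}) is convex and decreasing in $C$; composing a convex decreasing function with the concave $C(SNR_{i,sm}^u)$ yields a convex function of ${\bm G}_{i,sm}$, so $f_{i,sm}^u\ge r_{i,sm}^u$ again defines a convex set. Collecting the concave objective, the convexified (\ref{eq:transformed_problem}b) and (\ref{eq:transformed_problem}i), and the affine/LMI constraints (\ref{eq:transformed_problem}c)--(\ref{eq:transformed_problem}f) yields a single convex program over the PSD cones, i.e.\ a standard convex SDP.

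The \textbf{main obstacle} is the convexity of the link constraint $f_{i,sm}^u\ge r_{i,sm}^u$: the finite-blocklength expression (\ref{eq:URLLC_channel_use}) is an involved combination of $1/C$, $1/C^2$, and $C^{-2}\sqrt{1+4L C/Q^{-2}(\alpha)}$ terms, and establishing that this is convex and monotone decreasing in $C$ (so that its composition with the concave $C(SNR_{i,sm}^u)$ stays convex) is the delicate analytic step. I expect to need the enforced-SNR regime $SNR_{i,sm}^u\ge 5$ of (\ref{eq:transformed_problem}d), under which the dominant $L/C$ term controls the behaviour and the correction terms preserve convex monotonicity. A secondary technical point is the loose use of ``SDP'', since the $\ln(1+\cdot)$ objective terms and the reciprocal-logarithm form of the convexified (\ref{eq:transformed_problem}b) are convex but strictly require an exponential-cone (or successive-approximation) representation rather than a pure linear-matrix-inequality one.
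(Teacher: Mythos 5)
Your proposal follows essentially the same route as the paper's proof: you introduce slack variables $f_{i,sm}^u \ge r_{i,sm}^u$, substitute them for the channel uses inside $\varsigma^{\rm mean}$ and $\varsigma^{\rm variance}$ so that the bandwidth constraint (\ref{eq:transformed_problem}i) splits into the pair (\ref{eq:slack_equation1})--(\ref{eq:slack_equation2}), prove convexity of the link constraint by composing the convex, decreasing map $r(\cdot)$ with the concave $C(SNR_{i,sm}^u)$, and recover exactness by the same monotonicity argument (any strict slack can be decreased without lowering the objective, so the constraints can be made active at an optimum). The only divergences are in two peripheral steps, and in both you are more rigorous than the paper. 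First, for the rate constraint (\ref{eq:transformed_problem}b) the paper asserts joint concavity of $\omega_{sm}^e\log_2(1+SNR_{i,sm}^e)$ via ``the Hessian is negative,'' which is false as stated: the Hessian of $(\omega,x)\mapsto\omega\log_2(1+x)$ is indefinite. Your reciprocal reformulation $\omega_{sm}^e\ge R_s/\log_2(1+SNR_{i,sm}^e)$ correctly establishes that the feasible set is nonetheless convex (the reciprocal of a positive concave function is convex), repairing rather than reproducing the paper's argument. Second, you identify the objective as concave (the $\ln(1+\cdot)$ terms), whereas the paper calls it ``linear''; and your closing caveat that the logarithmic terms strictly require an exponential-cone or successive-approximation representation, so that ``standard convex SDP'' is a mild abuse of terminology, is a fair point the paper glosses over. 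None of this changes the substance: both proofs hinge on the identical slack-variable construction and the identical exactness argument.
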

\begin{proof}
In (\ref{eq:iterative_bandwidth_n_beamforming}), it can be observed that the objective function is linear with regard to ${\bm V}_{sm}$ ($s \in {\cal S}^e$) and ${\bm G}_{i,sm}$. Since the Hessian matrix w.r.t $\omega_{sm}^e$ and ${\bm V}_{sm}$ is negative, the constraint (\ref{eq:transformed_problem}b) is convex. The constraint (\ref{eq:transformed_problem}c) is affine. We next discuss the convexity of (\ref{eq:transformed_problem}i). From (\ref{eq:URLLC_bandwidth}), we can observe that $W_m^u$ is a complicate function of $\bm G_{i,sm}$. However, $W_m^u$ is a quadratic function of $\bm r_m^u$. Therefore, via introducing a family of slack variables $\bm f_m = \{f_{i,sm}^u\}$, we can obtain that (\ref{eq:transformed_problem}i) is equivalent to the following expressions
\begin{equation}\label{eq:slack_equation1}
   {\sum\limits_{s \in {{\cal S}^e}} {b_{sm}^e\omega _{sm}^e}  + \varsigma^{\rm mean}(\bm f_m) + Q^{-1}(\beta)\sqrt{\varsigma^{\rm variance}(\bm f_m)} \le W}
\end{equation}
and
\begin{equation}\label{eq:slack_equation2}
    \begin{array}{l}
f_{i,sm}^u \ge r_{i,sm}^u \\
\quad = \frac{{L_{i,sm}^u}}{{C(SNR_{i,sm}^u)}} + \frac{{{{{Q^{ - 2}}(\alpha)}}}}{{2{{(C(SNR_{i,sm}^u))}^2}}} + \\
 \quad  \frac{{{{{Q^{ - 2}}(\alpha)}}}}{{2{{(C(SNR_{i,sm}^u))}^2}}}\sqrt {1 + \frac{{4L_{i,sm}^uC(SNR_{i,sm}^u)}}{{{{{Q^{ - 2}}(\alpha)}}}}}, \forall i \in {\cal I}_s^u, s \in {\cal S}^u
\end{array}
\end{equation}

By referring to the definition of $\varsigma^{\rm mean}(\cdot)$ and $\varsigma^{\rm variance}(\cdot)$, we can infer that (\ref{eq:slack_equation1}) is a quadratic constraint w.r.t $\{f_{i,sm}^u\}$ and is convex. Besides, from (\ref{eq:slack_equation1}), the following crucial observation can be obtained: $r_{i,sm}^u$ is not only convex but also a monotonically decreasing function over $C(SNR_{i,sm}^u)$. This claim can be obtained by computing the first order and second order derivatives of $r_{i,sm}^u$ over $C(SNR_{i,sm}^u)$. Since $C(SNR_{i,sm}^u)$ monotonically increases with $SNR_{i,sm}^u$ we can further conclude that $r_{i,sm}^u$ monotonically decreases with $SNR_{i,sm}^u$ and then with $\bm G_{i,sm}$. As a result, $r_{i,sm}^u$ is also convex w.r.t $\bm G_{i,sm}$ and (\ref{eq:slack_equation2}) is convex.

By substituting (\ref{eq:slack_equation1}) and (\ref{eq:slack_equation2}) for (\ref{eq:transformed_problem}i), (\ref{eq:iterative_bandwidth_n_beamforming}) can be reformulated as
\begin{subequations}\label{eq:transformed_bandwidth_n_beamforming}
\begin{alignat}{2}
& \mathop {{\rm{maximize}}}\limits_{\{ \omega _{sm}^e, {\bm V_{sm}}, {\bm G_{i,sm}}, {\bm f_m}\}}
\sum\limits_{s \in {{\cal S}^e}} {U_s^e({\bm V}_{sm})}  + \hat \rho \sum\limits_{s \in {{\cal S}^u}} {\sum\limits_{i \in {\cal I}_s^u} {U_s^u({\bm G}_{i,sm})} }  \allowdisplaybreaks[4] \\
& {\rm subject \text{ } to:} \nonumber \\
& \rm {constraints \text{ }  (\ref{eq:transformed_problem}b)-(\ref{eq:transformed_problem}f), (\ref{eq:slack_equation1}),(\ref{eq:slack_equation2}) \text{ } are \text{ } satisfied.}
\end{alignat}
\end{subequations}

Note that if there exists $\bm G_{i,sm}^{\star}$, which is the optimal solution to (\ref{eq:transformed_bandwidth_n_beamforming}), such that the constraint (\ref{eq:slack_equation2}) is satisfied with strict inequality, we can always reduce $f_{i,sm}^{u}$ to make (\ref{eq:slack_equation2}) active without decreasing the objective value of (\ref{eq:transformed_bandwidth_n_beamforming}). Therefore, there is always an optimal solution to (\ref{eq:transformed_bandwidth_n_beamforming}) such that all constraints in (\ref{eq:slack_equation2}) are active. Then, we obtain that (\ref{eq:transformed_bandwidth_n_beamforming}) and (\ref{eq:iterative_bandwidth_n_beamforming}) are equivalent.

From the above analysis, we can know that (\ref{eq:transformed_bandwidth_n_beamforming}) consists of a linear objective function, a quadratic cone constraint of a vector and convex cone constraints of positive semidefinite matrices; hence it is a standard convex SDP problem.
\end{proof}

Semidefinite optimization is a generalization of conic optimization, which allows the utilization of matrix variables belonging to the convex cone of positive semidefinite matrices. Therefore, we can achieve the optimal solution to (\ref{eq:transformed_bandwidth_n_beamforming}) using MOSEK.

\subsection{Restoration of Timescale Variables}
In the above subsections, we investigate the method of obtaining mini-timescale variables $\{\omega_{sm}^e, b_{sm}^e\}$. We next answer the question of \emph{how to restore timescale variables $\{\omega_s^e(\bar t), b_s^e(\bar t)\}$ from $\{\omega_{sm}^e, b_{sm}^e\}$}?

The ADMM method can be exploited to restore timescale variables. A crucial conclusion obtained from the ADMM method is that it will drive mini-timescale variables (or called local variables in ADMM) towards their average value. Particularly, for all $s \in {\cal S}^e$, the continuous variable $\omega_s^e(\bar t)$ can be restored by \cite{boyd2011distributed}
\begin{equation}\label{eq:calculate_wse}
\omega_s^e(\bar t) \approx \frac{1}{M}\sum\limits_{m=1}^M {\omega_{sm}^e}, \forall s \in {\cal S}^e
\end{equation}

However, the average value scheme may not be applicable for the zero-one variable $b_s^e(\bar t)$. To restore $b_s^e(\bar t)$ from $\{b_{sm}^e\}$, a greedy restoration scheme, the fundamental idea of which is to gradually decline eMBB slice requests based on whether the constraints of (\ref{eq:original_problem}) can be satisfied, is developed. The main steps of the scheme are summarized as follows:
\begin{itemize}
    \item Initialization: For all $s \in {\cal S}^e$, initialize $\omega_s^e(\bar t)$ using (\ref{eq:calculate_wse}) and set $C_s^e = \sum\limits_{m = 1}^M{b_{sm}^e}$. For all $i \in {\cal I}_s^u$ and $s \in {\cal S}^u$, let $b_{i,s}^u = 0$. Let $b_s^e(\bar t) = 1$ if the bandwidth $\omega_s^e(\bar t)$ allocated to slice $s$ is non-zero; otherwise, let $b_s^e(\bar t) = 0$.
    \item Decline slice requests: Given $\omega_s^e(\bar t)$ and $b_s^e(\bar t)$, check the feasibility of all $M$ single timescale problems (\ref{eq:transformed_problem}). If all $M$ problems are feasible, then output $b_s^e(\bar t)$ and terminate; otherwise, gradually decline the request of slice $s^{\star} = \mathop {\arg \min }\limits_{s \in {\cal S}^e} C_s^e$. Then, update $\omega_{s^{\star}}^e(\bar t) = 0$ and $b_{s^{\star}}^e(\bar t) = 0$, and repeat to check the feasibility of all $M$ problems.
\end{itemize}

The following Lemma shows the effectiveness of the restoration scheme.

\begin{lemma}\label{lem:restroation_scheme}
For any $\theta \in (0, 1)$, if the number of samples $M \ge M^{\star}$ calculated by (\ref{eq:M_star}), then $b_s^e(\bar t)$ and $\omega_s^e(\bar t)$ for all $s \in {\cal S}^e$ that satisfy the constraints of all $M$ single timescale problems (\ref{eq:transformed_problem}) may be feasible to the problem (\ref{eq:original_problem}) with a probability at least $1 - \theta$.
\end{lemma}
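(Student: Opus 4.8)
The plan is to split the constraints of the multi-timescale problem (\ref{eq:original_problem}) into two groups and treat them separately. The only constraint of (\ref{eq:original_problem}) that is probabilistic in the channel realizations is the eMBB QoS chance constraint (\ref{eq:eMBB_QoS}); the remaining ones, namely the per-BS power budget (\ref{eq:RRH_energy}), the bandwidth limit (\ref{eq:total_bandwidth}), the URLLC SNR floor (\ref{eq:original_problem}d), and the integrality of $b_s^e(\bar t)$, are deterministic once the slot-level variables are fixed. I would therefore show that the deterministic group holds surely while the chance constraint (\ref{eq:eMBB_QoS}) holds with probability at least $1-\theta$; their conjunction then yields feasibility of (\ref{eq:original_problem}) with probability at least $1-\theta$.

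First I would record that, by the termination rule of the greedy restoration scheme, the output pair $\{b_s^e(\bar t), \omega_s^e(\bar t)\}$ is retained only once every one of the $M$ single-timescale problems (\ref{eq:transformed_problem}) is feasible under these shared slot-level values, which is exactly the hypothesis of the lemma. Hence, for each sample $m$ there exist beamformers $\{\bm V_{sm}, \bm G_{i,sm}\}$ and masks $\{b_{i,sm}^u\}$ meeting (\ref{eq:transformed_problem}c), (\ref{eq:transformed_problem}d), and (\ref{eq:transformed_problem}i)-(\ref{eq:transformed_problem}k). Since the minislot channels are drawn i.i.d. from the same $\Phi_{\bm h}$ as the SAA samples and the restored $\{b_s^e(\bar t), \omega_s^e(\bar t)\}$ are held common across minislots, I would argue that this per-sample feasibility transfers minislot by minislot, so that (\ref{eq:RRH_energy}), (\ref{eq:total_bandwidth}), (\ref{eq:original_problem}d), and the integrality constraints of (\ref{eq:original_problem}) are met at every $t$, i.e. deterministically.

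The core step is the chance constraint. For each accepted eMBB slice (with $b_s^e(\bar t)=1$) the sampled QoS constraint (\ref{eq:transformed_problem}b) reads $\gamma_{i,sm}^e \ge b_s^e(\bar t) R_s$, which is precisely the scenario version of (\ref{eq:eMBB_QoS}) evaluated at the restored bandwidth $\omega_s^e(\bar t)$; for a rejected slice ($b_s^e(\bar t)=0$) the inequality is void. Because the restored variables satisfy all $M$ single-timescale problems, these $M$ i.i.d. sampled inequalities hold, and the total real dimension of the underlying beamforming space is $(N^e+N^u)JK$. I would then invoke the scenario/SAA sample-complexity bound already quoted after (\ref{eq:M_star}) (from \cite{liu2016sample,so2013distributionally}): since $M \ge M^{\star}$ with $M^{\star}$ given by (\ref{eq:M_star}), any point feasible to the $M$ sampled QoS inequalities satisfies (\ref{eq:eMBB_QoS}) with confidence at least $1-\theta$. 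Combined with the deterministic group, this delivers the claim.

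The main obstacle I anticipate is the faithful reduction of the restored, shared-across-minislots constraint (\ref{eq:transformed_problem}b) to the exact hypothesis of the scenario bound: I must verify that averaging the bandwidth via (\ref{eq:calculate_wse}) and committing to a common $b_s^e(\bar t)$ leave the per-sample QoS inequalities in the single-indicator-times-$R_s$ form required by (\ref{eq:M_star}), so that the dimension count $(N^e+N^u)JK$ and the violation level $\varepsilon$ match those of \cite{liu2016sample,so2013distributionally} without an extra union bound over slices. Correctly handling the binary $b_s^e(\bar t)$ (active versus void constraint) inside the otherwise continuous scenario argument is the delicate point; by contrast, the transfer of the deterministic constraints from the $M$ samples to all $T$ minislots is routine under the i.i.d. assumption.
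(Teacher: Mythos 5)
Your proposal is correct and follows essentially the same route as the paper: the paper's proof simply observes that the sampled QoS constraints (\ref{eq:transformed_problem}b) are robustness constraints approximating the chance constraint (\ref{eq:eMBB_QoS}) and invokes the sample-approximation result (Theorem 1 of \cite{so2013distributionally}) underlying the bound $M^{\star}$ in (\ref{eq:M_star}), which is exactly the scenario bound you invoke as your core step. Your explicit separation of the deterministic constraints from the single probabilistic one, and your attention to the active-versus-void role of the binary $b_s^e(\bar t)$, are just a more careful spelling-out of what the paper leaves implicit.
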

\begin{proof}
(\ref{eq:eMBB_QoS}) is a chance constraint while (\ref{eq:transformed_problem}b) is a robustness constraint. Therefore, the proof of this Lemma may be equivalent to the proof of the feasibility of using a robustness constraint to approximate a chance constraint, which has been rigorously shown in \cite{so2013distributionally} via sample approximations. We omit the detailed proof here as a similar proof can be found in the proof of Theorem 1 in \cite{so2013distributionally}.
\end{proof}

Denote the objective function value of (\ref{eq:transformed_problem}) at the $r$-th iteration by ${\mathscr{F}}\left (b_{sm}^{e(r)}, b_{i,sm}^{u(r)},\omega_{sm}^{e(r)} \right )$, where $\{b_{sm}^{e(r)}$, $b_{i,sm}^{u(r)}$, $\omega_{sm}^{e(r)}\}$ represent the obtained solutions to (\ref{eq:transformed_problem}) at the $r$-th iteration. The main steps of mitigating (\ref{eq:transformed_problem}) can then be summarized in the following Algorithm 2.
\begin{algorithm}
\caption{Iterative acceptance and resource allocation algorithm, IARA-$\alpha\beta$}
\label{alg2}
\begin{algorithmic}[1]
\STATE \textbf{Initialization:} For all $m \in {\cal M}$, randomly initialize $\{\bm G_{i,sm}^{(0)}\}$, $\{\omega_{sm}^{e(0)}\}$, $\{\bm V_{sm}^{(0)}\}$, and let $r_{\rm max} = 250$ and $r = 0$.
\FOR{$m = 1 : M$}
\REPEAT
\STATE Mitigate (\ref{eq:eMBB_slice_enforcement_problem}) to obtain $\{b_{sm}^{e(r+1)}\}$, and call Algorithm \ref{alg1} to generate $\{b_{i,sm}^{u(r+1)}\}$.
\IF{${\mathscr{F}}\left (b_{sm}^{e(r+1)}, b_{i,sm}^{u(r+1)},\omega_{sm}^{e(r)} \right ) \le {\mathscr{F}}\left (b_{sm}^{e(r+1)}, b_{i,sm}^{u(r)},\omega_{sm}^{e(r)} \right )$ }
\STATE Update $b_{i,sm}^{u(r+1)} = b_{i,sm}^{u(r)}$ for all $i \in {\cal I}_s^u$ and $s \in {\cal S}^u$.
\ENDIF
\STATE Mitigate (\ref{eq:transformed_bandwidth_n_beamforming}) to obtain $\{\omega_{sm}^{e(r+1)}\}$, $\{\bm G_{i,sm}^{(r+1)}\}$, and $\{\bm V_{sm}^{(r+1)}\}$.
\STATE Update $r = r + 1$.
\UNTIL{Convergence or reach the maximum number of iteration $r_{\rm max}$, and let $\omega_{sm}^e = \omega_{sm}^{e(r)}$.}
\ENDFOR
\STATE Calculate $\omega_s^e(\bar t)$ using (\ref{eq:calculate_wse}), run the greedy restoration scheme to generate $b_s^e(\bar t)$, and update $\omega_s^e(\bar t) = \omega_s^e(\bar t)b_s^e(\bar t)$ for all $s \in {\cal S}^e$.
\end{algorithmic}
\end{algorithm}

Besides, if we denote $\bm G_{i,sm}^{\star}$ for all $i \in {\cal I}_s^u$ and $s \in {\cal S}^u$ and $\bm V_{sm}^{\star}$ for all $s\in {\cal S}^e$ as the optimal solution to (\ref{eq:transformed_problem}), then the following Lemma presents the effectiveness of applying SDR to (\ref{eq:transformed_problem}) and the convergence of Algorithm \ref{alg2}.
\begin{lemma}\label{lem:SDR}
    \rm {For all $m \in {\cal M}$, the SDR for both ${\bm G}_{i,sm}$ and ${\bm V}_{sm}$ in (\ref{eq:transformed_problem}) is tight, that is,
    \begin{equation}\label{eq:rank_Gis}
    \begin{array}{l}
        {\rm rank}({\bm G}_{i,sm}^{\star}) \le 1, \forall i\in {\cal I}_s^u, s\in {\cal S}^u, \\
        {\rm rank}({\bm V}_{sm}^{\star}) \le 1, \forall s \in {\cal S}^e
        \end{array}
    \end{equation}

    Besides, Algorithm \ref{alg2} is convergent.
    }
\end{lemma}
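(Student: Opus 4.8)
The plan is to separate the lemma into its two assertions -- rank-one tightness of the relaxed matrices and convergence of Algorithm~\ref{alg2} -- and prove them independently. For tightness I would first fix the optimal binary variables $\{b_{sm}^{e\star},b_{i,sm}^{u\star}\}$, so that the residual problem in the continuous variables is precisely the convex SDP~\eqref{eq:transformed_bandwidth_n_beamforming} established in Corollary~\ref{corollary_porb}, whose feasible region is the relaxation of~\eqref{eq:transformed_problem}. Under the spare-bandwidth assumption this SDP is strictly feasible, so Slater's condition holds and the KKT system is necessary and sufficient for optimality. I would then form the Lagrangian, attaching multipliers $\nu_{i,sm}\ge 0$ to the SNR/rate inequalities, $\xi_j\ge 0$ to the per-BS power constraints~(\ref{eq:transformed_problem}c), a multiplier to the channel-use slack relation~\eqref{eq:slack_equation2}, and PSD dual matrices $\bm\Lambda_{sm},\bm\Xi_{i,sm}\succeq 0$ to the cone constraints~(\ref{eq:transformed_problem}e)--(\ref{eq:transformed_problem}f), and read off the stationarity equations in $\bm V_{sm}$ and $\bm G_{i,sm}$.

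The decisive point is the algebraic form of these equations. Each matrix enters the objective and the remaining constraints only through the trace terms $\mathrm{tr}(\bm H_{i,sm}\cdot)$, $\mathrm{tr}(\bm Z_j\cdot)$ and $\mathrm{tr}(\cdot)$, and the gradients of both the logarithmic profit term and of $r_{i,sm}^u$ in~\eqref{eq:slack_equation2} are scalar multiples of the rank-one matrix $\bm H_{i,sm}$. Hence for the unicast URLLC matrices the stationarity condition rearranges to $\bm\Xi_{i,sm}=\bm D_{i,sm}-c_{i,sm}\bm H_{i,sm}$, where $c_{i,sm}>0$ and $\bm D_{i,sm}=\hat\rho\eta b_{i,sm}^{u}\bm I+\sum_j\xi_j b_{i,sm}^{u}\bm Z_j\succ 0$. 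The crucial enabler here is the energy-efficiency penalty: the identity term $\hat\rho\eta b_{i,sm}^{u}\bm I$ makes $\bm D_{i,sm}$ full rank regardless of which power constraints are active. Subtracting a single rank-one matrix from a full-rank positive-definite one leaves rank at least $JK-1$, and complementary slackness $\bm\Xi_{i,sm}\bm G_{i,sm}=\bm 0$ then confines $\bm G_{i,sm}^{\star}$ to the (at most one-dimensional) null space of $\bm\Xi_{i,sm}$, giving $\mathrm{rank}(\bm G_{i,sm}^{\star})\le 1$. If $\bm\Xi_{i,sm}$ were full rank then $\bm G_{i,sm}^{\star}=\bm 0$, which violates~(\ref{eq:transformed_problem}d) whenever $b_{i,sm}^{u\star}=1$, so the rank is exactly $JK-1$ in the active case.

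I expect the eMBB multicast matrices to be the main obstacle. Repeating the argument for $\bm V_{sm}$ yields $\bm\Lambda_{sm}=\bm D_{sm}'-\sum_{i\in\mathcal I_s^e}a_i\bm H_{i,sm}$ with $\bm D_{sm}'\succ 0$ and $a_i>0$, i.e. a full-rank term minus a \emph{sum} of rank-one channel terms arising from the per-user profit and the per-user rate constraints~(\ref{eq:transformed_problem}b). A priori this only bounds $\mathrm{rank}(\bm V_{sm}^{\star})$ by $|\mathcal I_s^e|$, which is the generic situation for multicast beamforming and exactly where semidefinite relaxation can fail to be tight. To force rank one I would try to show that at the optimum the energy-efficiency structure makes only a single rank-one contribution effective -- for instance by arguing that the power-balance penalty drives all but the binding worst-user rate constraint to inactivity, so that $\sum_i a_i\bm H_{i,sm}$ effectively collapses to one rank-one term. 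Establishing this reduction rigorously is the crux of the proof and the step I am least confident survives without an additional assumption on the multicast channels.

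For the convergence claim I would read Algorithm~\ref{alg2} as three-block coordinate ascent on the objective $\mathscr F$ of~\eqref{eq:transformed_problem} over its feasible set. The eMBB-enforcement step~\eqref{eq:eMBB_slice_enforcement_problem} solves a linear integer program to optimality and, since the URLLC utility does not depend on $b_{sm}^e$, cannot decrease $\mathscr F$; the explicit acceptance test in lines~5--7 guarantees the greedy-mask update never lowers $\mathscr F$, reverting to $b_{i,sm}^{u(r)}$ otherwise; and the beamforming/bandwidth step solves the convex SDP~\eqref{eq:transformed_bandwidth_n_beamforming} exactly. Hence $\mathscr F(b_{sm}^{e(r)},b_{i,sm}^{u(r)},\omega_{sm}^{e(r)})$ is monotonically non-decreasing in $r$. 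Finally I would bound $\mathscr F$ from above: the power-balance terms are capped by $\sum_j E_j$, and the finite transmit powers together with~(\ref{eq:transformed_problem}c) keep every SNR, and therefore every logarithmic profit term, bounded. A monotone sequence bounded above converges, which establishes convergence of the objective value generated by Algorithm~\ref{alg2}.
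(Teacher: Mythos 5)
Your convergence argument coincides with the paper's: each block update --- the linear integer program (\ref{eq:eMBB_slice_enforcement_problem}), the guarded mask update in lines 5--7 of Algorithm \ref{alg2}, and the convex SDP (\ref{eq:transformed_bandwidth_n_beamforming}) --- is non-decreasing in ${\mathscr F}$, and ${\mathscr F}$ is bounded above because system bandwidth and transmit power are finite, so the objective sequence converges; this is exactly the paper's argument. Your tightness argument for the URLLC matrices is also sound and is the standard Lagrangian-dual argument the paper points to: since the energy-efficiency penalty injects the full-rank term $\hat\rho\eta b_{i,sm}^u\bm I$ into the dual slack matrix, that matrix is a positive definite matrix minus a single rank-one multiple of $\bm H_{i,sm}$, its null space has dimension at most one, and complementary slackness gives ${\rm rank}(\bm G_{i,sm}^\star)\le 1$.

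The gap you flag for the multicast matrices $\bm V_{sm}$ is genuine, and the repair you sketch cannot close it. You propose to argue that at the optimum only the binding worst-user rate constraint contributes a rank-one term, so that $\sum_i a_i\bm H_{i,sm}$ collapses to a single term. But the positive coefficients $a_i$ do not originate only in the rate constraints (\ref{eq:transformed_problem}b): the eMBB utility itself contains $\sum_{i\in{\cal I}_s^e}\ln\left(1+SNR_{i,s}^e\right)$, so the stationarity equation for $\bm V_{sm}$ carries a strictly positive multiple of $\bm H_{i,sm}$ for \emph{every} user in the slice, whether or not its rate constraint is active (and the same remains true after the paper's linearization of the logarithms via slack variables, since those epigraph constraints are necessarily active with positive multipliers). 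Consequently the dual slack matrix is positive definite minus a sum of $|{\cal I}_s^e|$ rank-one terms, complementary slackness only bounds ${\rm rank}(\bm V_{sm}^\star)$ by the dimension of ${\rm span}\{\bm h_{i,sm}\}$, and no constraint-activity argument can do better --- this is precisely the well-known reason SDR for multicast QoS beamforming is not tight in general. Note that the paper does not prove this half of (\ref{eq:rank_Gis}) inline either: it linearizes the logarithmic objective with slack variables and then defers the entire rank-one argument to the appendix of \cite{tang2019service}. So the part of the lemma your proposal leaves open is exactly the part the paper settles by citation; completing your proof would require reproducing (or verifying) that multicast argument, not refining the worst-user reduction you describe.
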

\begin{proof}
The Lagrangian dual method can be leveraged to prove the tightness of SDR for power matrices. Considering that there are logarithmic terms in the objective function of (\ref{eq:transformed_problem}), we can transform the non-linear objective function into a linear one by introducing slack variables. Then a similar proof can be found in the Appendix of \cite{tang2019service} to prove the tightness of SDR for power matrices.

On the one hand, at the $r+1$-th iteration, we can obtain the optimal solution $b_{sm}^{e(r+1)}$ to (\ref{eq:eMBB_slice_enforcement_problem}) by mitigating a linear integer programming problem. Thus, we have ${\mathscr {F}}\left (b_{sm}^{e(r+1)}, b_{i,sm}^{u(r)}, \omega_{sm}^{e(r)}\right ) \ge {\mathscr {F}}\left (b_{sm}^{e(r)}, b_{i,sm}^{u(r)}, \omega_{sm}^{e(r)}\right )$. From Algorithm \ref{alg2}, we can also conclude that ${\mathscr {F}}\left (b_{sm}^{e(r+1)}, b_{i,sm}^{u(r+1)}, \omega_{sm}^{e(r)}\right ) \ge {\mathscr {F}}\left (b_{sm}^{e(r+1)}, b_{i,sm}^{u(r)}, \omega_{sm}^{e(r)}\right )$. Further, we have ${\mathscr {F}}\left (b_{sm}^{e(r+1)}, b_{i,sm}^{u(r+1)}, \omega_{sm}^{e(r+1)}\right ) \ge {\mathscr {F}}\left (b_{sm}^{e(r+1)}, b_{i,sm}^{u(r+1)}, \omega_{sm}^{e(r)}\right )$ as $\omega_{sm}^{e(r+1)}$ is the optimal solution to the SDP problem (\ref{eq:transformed_bandwidth_n_beamforming}). Therefore, we can conclude ${\mathscr {F}}\left (b_{sm}^{e(r+1)}, b_{i,sm}^{u(r+1)}, \omega_{sm}^{e(r+1)}\right ) \ge {\mathscr {F}}\left (b_{sm}^{e(r)}, b_{i,sm}^{u(r)}, \omega_{sm}^{e(r)}\right )$. On the other hand, in (\ref{eq:transformed_problem}), as the system bandwidth and transmit power are limited, the achievable objective function value is upper-bounded. Then we can say that Algorithm {\ref{alg2}} is convergent.
\end{proof}

\section{Optimization of resource mask and beamforming with sensed channels}
With the system generated channel samples, the above sections obtain the solutions $\{b_s^e(\bar t), \omega_s^e(\bar t)\}$ to (\ref{eq:original_problem}). However, the mini-timescale variables $\{b_{i,s}^u(t), {\bm v}_s(t), {\bm g}_{i,s}(t)\}$ should be optimized based on the sensed channels at each minislot $t$.

At each minislot, with the given timescale variables $\{b_s^e(\bar t), \omega_s^e(\bar t)\}$, the original problem (\ref{eq:original_problem}) can be reduced as
\begin{subequations}\label{eq:mini_time_scale_transformed_problem}
\begin{alignat}{2}
& \mathop {{\rm{maximize}}}\limits_{\{b_{i,s}^u(t),{\bm v}_s(t), {\bm g}_{i,s}(t)\}}
\sum\limits_{s \in {{\cal S}^e}} {U_s^e({\bm v}_s(t))}  +  \hat \rho \sum\limits_{s \in {{\cal S}^u}} {\sum\limits_{i \in {\cal I}_s^u} {U_s^u(\bm g}_{i,s}(t)) }  \\
& {\rm subject \text{ } to:} \nonumber \\
&     \gamma _{i,s}^e(t) \ge b_s^e(\bar t)R_s, \forall i \in {\cal I}_s^e, s \in {\cal S}^e \\
& \rm {constraints \text{ } (\ref{eq:RRH_energy}),(\ref{eq:total_bandwidth}),(\ref{eq:original_problem}c),(\ref{eq:original_problem}d)\text{ } are \text{ } satisfied.}
\end{alignat}
\end{subequations}

According to the analysis presented in the above sections, (\ref{eq:mini_time_scale_transformed_problem}) is a mixed-integer non-convex programming problem with positive semidefinite matrices, which is highly challenging to be mitigated. Therefore, the previously presented SDR method and the proposal of alternative optimization should be leveraged to obtain the solutions $\{b_{i,s}^u(t), {\bm v}_s(t), {\bm g}_{i,s}(t)\}$.

First, given the transmit beamformers ${\bm v}_s(t)$ for all $s \in {\cal S}^e$, ${\bm g}_{i,s}(t)$ for all $i \in {\cal I}_s^u$ and $s \in {\cal S}^u$, (\ref{eq:mini_time_scale_transformed_problem}) will be reduced to a problem similar to (\ref{eq:resource_mask_problem}) with sensed channels at minislot $t$. Then, the proposed GRM algorithm can be utilized to obtain the resource mask $\{b_{i,s}^u(t)\}$.

Second, given the generated resource mask $\{b_{i,s}^u(t)\}$, the problem (\ref{eq:mini_time_scale_transformed_problem}) can be reformulated as a problem similar to (\ref{eq:iterative_bandwidth_n_beamforming}) with a family of constants $\{\omega_s^e(\bar t)\}$ and sensed channels as inputs at $t$. Likewise, after performing the equivalent transformation, MOSEK can be leveraged to obtain the transmit power matrices ${{\bm V}_s}(t)$ and ${\bm G}_{i,s}(t)$.

Recall that the SDR for both $\bm V_{s}(t)$ and $\bm G_{i,s}(t)$ is tight, we therefore can perform the eigenvalue decomposition on $\bm V_{s}(t)$ and $\bm G_{i,s}(t)$ to obtain the optimal beamforming vectors $\bm v_s(t)$ and $\bm g_{i,s}(t)$, respectively.

To sum up, we can depict the logical flow to mitigate the problem (\ref{eq:original_problem}) in Fig. \ref{fig_Algorithm_flow}. At the beginning of each time slot $\bar t$, with the system generated channels the SDRAN-C will follow the flow \textcircled{1} $ \to$ \textcircled{2} $ \to $ \textcircled{3} $\to $ \textcircled{4} $\to$ \textcircled{5} $\to$ \textcircled{6} to achieve $\{b_s^e(\bar t)\}$ and $\{\omega_s^e(\bar t)\}$. With the achieved $\{b_s^e(\bar t)\}$ and $\{\omega_s^e(\bar t)\}$, the SDRAN-C acquires sensed channels with which the following flow \textcircled{3} $\to $ \textcircled{4} $\to$ \textcircled{5} $\to$ \textcircled{6} will be executed to generate beamformers $\{\bm v_s(t)\}$ and $\{\bm g_{i,s}(t)\}$ and resource mask $\{b_{i,s}^u (t)\}$.

\begin{figure}[!t]
\flushleft
\includegraphics[width=2.9in]{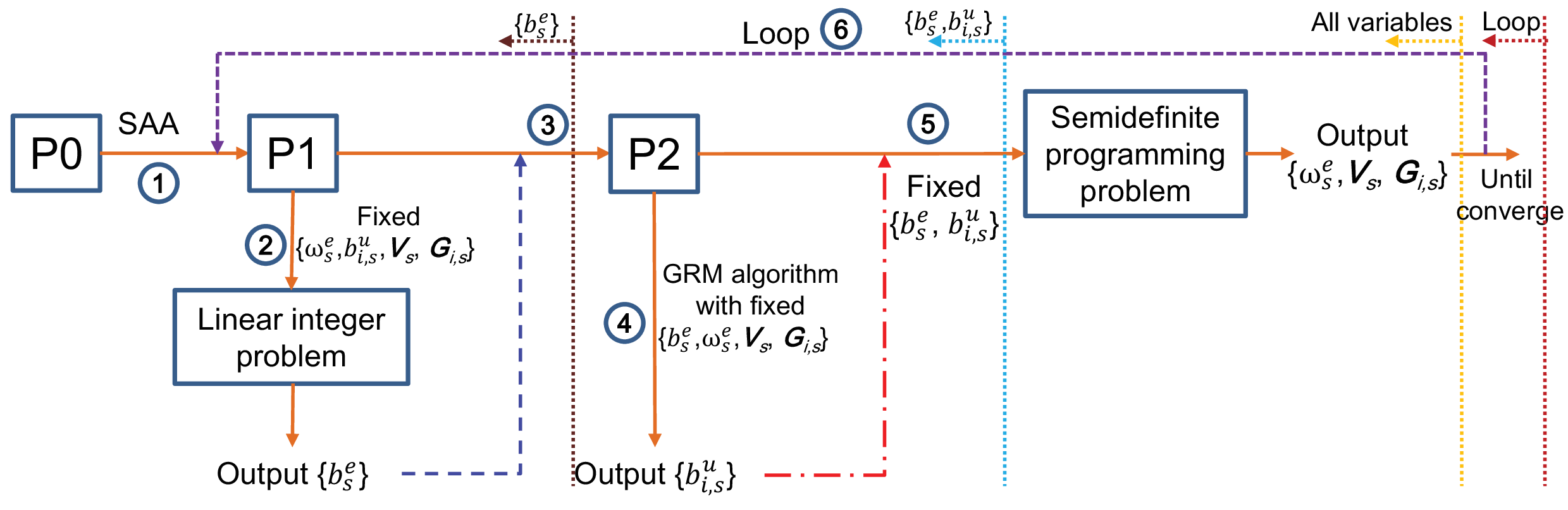}
\caption{Algorithm logical flow.}
\label{fig_Algorithm_flow}
\end{figure}

\section{Prototype of the RAN Slicing System}
In Fig. \ref{fig_RAN_slicing_archi}, we consider a RAN slicing system prototype with four major parties:
\begin{itemize}
    \item \emph{End UEs: }it includes multicast eMBB UEs and bursty URLLC UEs, run their services on the slices managed by the virtualized network slice management.
    \item \emph{Software-defined RAN coordinator (SDRAN-C): } it calculates and updates accepted network slices to accommodate service requirements of end UEs.
    \item \emph{Network slice management (NSM): } a virtualized function aiming to control and manage network slices.
    \item \emph{Core network control function (C$^2$F): } it configures the core network based on the slice requirements.
\end{itemize}

\begin{figure*}[!t]
\centering
\includegraphics[width=5.2in]{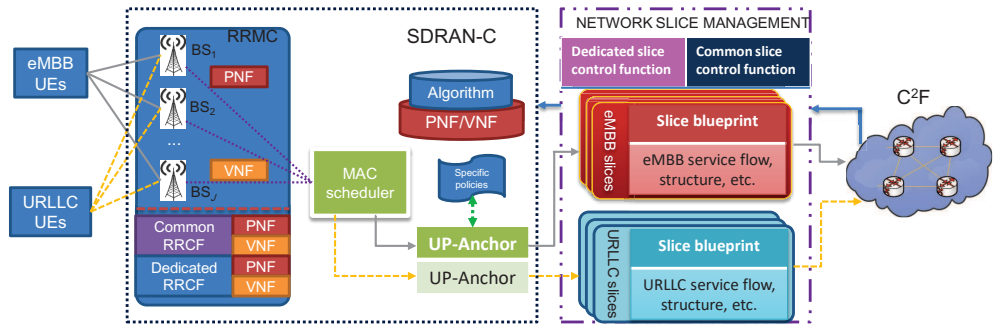}
\caption{A RAN slicing system prototype.}
\label{fig_RAN_slicing_archi}
\end{figure*}

We consider two types of end UEs, i.e., multicast eMBB UEs and bursty URLLC UEs. Each type of UEs possesses several specific features regarding their QoS requirements. These end UEs will send slice requests to SDRAN-C. Upon receiving UEs' slice requests, the SDRAN-C analyzes the slice requirements and makes a decision to accept or decline requests on the basis of the optimal policy.

The SDRAN-C block mainly consists of four components: 1) radio resource manage and control (RRMC); 2) MAC scheduler; 3) user plane anchor (UP-Anchor); 4) Algorithm. If some slices are created, then RRMC is responsible for configuring RAN protocol stacks and QoS according to slice requirements. The slice creation and configuration processes may take up multiple minislots. For example, for slices with high throughput requirements, radio bearers should be configured to support CoMP transmission, and multi-connectivity. For slices with high reliability and low latency requirements, lower frame error rates, reduced RTT, shortened TTI, and/or multi-point diversity schemes are desired to be utilized. Besides, radio resource control function (RRCF) in RRMC, which can be further split into dedicated and common RRCF related functions, will also be activated for UE-specific radio resource control on the basis of virtualized network function (VNF) and/or physical network function (PNF). Particularly, depending on underlying services, RRCF can configure and tailor UP protocol stacks. For example, for slices supporting low latency services, Internet protocol (IP) and related header compression may not be used, and radio link control function may be configured in the transparent mode \cite{rost2017network}.
The MAC scheduler is responsible for traffic scheduling based on the network condition so as to alleviate network congestion. There are many MAC scheduling schemes such as random access, back-off, access class barring. The UP-Anchor is responsible for distributing the traffic according to the configured slice policy, and for encryption
with slice-specific security keys. For example, for slices requested from industry, security, resilience, and reliability of services are of higher priority. Then, policy requirements, e.g., security, resilience, and reliability, should be specified for this type of slices.
Based on slice requests and available resources, the algorithm will periodically calculate, update, and reconfigure accepted slices so as to achieve the maximum total utility. Besides, it is also responsible for the RAN and CN mapping that can be implemented via some configuration protocols \cite{ni2019end}.

The virtualized NSM, which operates on the top of physical and/or virtualized infrastructure, is responsible for creating, activating, maintaining, configuring, and releasing slices during the life cycle of them. Via the dedicated/common slice control function, NSM will generate a network slice blueprint (i.e., a template) for each accepted network slice that describes the structure, configuration, control signals, and service flows for instantiating and controlling the network slice instance of a type of service during its life cycle. The slice instance includes a set of network functions and resources to meet the end-to-end service requirements.

For the C$^2$F, it will interpret the slice blueprint when the slice request arrives. Accordingly, the C$^2$F will arrange the network configuration according to the interpreted blueprint and find the optimal servers and paths to place VNFs to meet the required end-to-end service of the slice.
Besides, C$^2$F possesses multiple data centers for network slicing services. Each data center contains a set of servers (e.g., home subscribe server) with diverse resources, e.g., computing, storage, which are used to support VNFs' services such as identity, independent subscription, session for each network slice. Data centers are connected via backhaul links and can provide services jointly or separately.

\section{Simulation and Performance Evaluation}
This section aims to verify the effectiveness of the proposed algorithm via simulation.

\subsection{Comparison Algorithms and Parameter Setting}
To verify the effectiveness of the proposed IARA-$\alpha\beta$ algorithm, we compare it with three algorithms.
\begin{itemize}
    \item Exhaustive-search-based (ES-$\alpha\beta$) algorithm: Its difference from the IARA-$\alpha\beta$ algorithm is that ES-$\alpha\beta$ searches for $b_{i,s}^u(t)$ for all $s \in {\cal S}^u$ and $i \in {\cal I}_s^u$ via an exhaustive search scheme. The search complexity is $O(2^{S^u})$. The ES-$\alpha\beta$ algorithm follows the following logical flow: at each time slot, execute \textcircled{1} $ \to$ \textcircled{2} $ \to $ \textcircled{3} $\to $ exhaustive search $\to$ \textcircled{5} $\to$ \textcircled{6}; at each minislot, execute \textcircled{3} $\to $ exhaustive search $\to$ \textcircled{5} $\to$ \textcircled{6}.
    \item IARA-$\alpha$ algorithm: It does not take any action to achieve a low URLLC packet blocking probability $\beta$; thus, the system bandwidth reserved for URLLC slices is $W^u(t) = \varsigma^{\rm mean}(\bm r^u(t))$. Similar to IARA-$\alpha\beta$, IARA-$\alpha$ follows the logical flow: at each time slot, execute \textcircled{1} $ \to$ \textcircled{2} $ \to $ \textcircled{3} $\to $ \textcircled{4} $\to$ \textcircled{5} $\to$ \textcircled{6}; at each minislot, execute \textcircled{3} $\to $ \textcircled{4} $\to$ \textcircled{5} $\to$ \textcircled{6}.
    \item IRHS-$\alpha$ algorithm \cite{tang2019service}: IRHS-$\alpha$ algorithm also does not consider the case of achieving a low URLLC packet blocking probability $\beta$. It attempts to accept both eMBB and URLLC slice requests in a greedy way and optimizes beamforming alternatively. Besides, it optimizes the same total slice utility as the other comparison algorithms.
\end{itemize}

The parameter setting of the simulation is summarized as the following: this paper considers a circular area $\mathbb R$ with a radius of $0.5$ km. Three BSs are deployed on the boundary of $\mathbb R$, and the distance among each of them is equal. Multicast eMBB and bursty URLLC UEs are randomly, uniformly, and independently distributed in $\mathbb R$. The transmit antenna gain at each BS is set to be $5$ dB, and a log-normal shadowing path loss model is utilized to calculate the path loss between a BS and a UE. Particular, a downlink path loss is calculated as $H({\rm dB}) = 128.1 + 37.6\log_{10}d$, where $d$ (in km) represents the distance between a UE and a BS. The transmit antenna gain at each BS is set to be $5$ dB, and the log-normal shadowing parameter is set to be $10$ dB \cite{tang2019service}.

Besides, we consider homogeneous BSs with $E_j = 1$ W, $\forall j$, $\lambda_{i,s}= 0.1$ packet per unit time, $\sigma_{i,s}^2 = \sigma^2$, $L_{i,s}^u = L^u$, $\rho_s =  \rho$, $\forall i, s$. Other system configuration parameters are listed in Table \ref{table_simulation_parameters}.

\begin{table}[!t]
\renewcommand{\arraystretch}{1.3}
\caption{System configuration parameters}
\label{table_simulation_parameters}
\centering
\begin{tabular}{|c|c|c|c|c|c|}
\hline
Para. &	Value &  Para. &	Value & Para. &	Value \\\hline
$J$ &	$3$ & $K$ & $2$ & $\eta$ & $100$ \\\hline
$\hat \rho$  & $1$ & $L^u$ & $160$ bits & $\sigma^2$ & $-110$ dBm \\\hline
$T$ & $60$ & $W$ & $10$ MHz & $\kappa$ & $0.032$ \\\hline
$\tilde a$ & $0.1$ & $\alpha$ & $2 \times 10^{-8}$ & $\beta$ & $10^{-6}$ \\\hline
$\epsilon$ & $0.5$ & $\phi$ & $1.5$ & $\theta$ & $0.5$ \\\hline
\end{tabular}
\end{table}

\subsection{Performance Evaluation}
To comprehensively understand the effectiveness of the proposed algorithm, we design the simulation of supporting dedicated service and simulation of providing service multiplexing. In these simulations, the total utility and total power consumption are utilized as evaluation metrics. The total utility is defined as the sum of eMBB slice utility and URLLC slice utility (multiplied by $\hat \rho$), and the total power consumption is defined as the sum of consumed transmit power by all BSs during $T$ minislots.

\subsubsection{Results of dedicated service}
First, we assume that the CoMP-enabled RAN slicing system provides dedicated services for one type of four eMBB UEs, i.e., $S^e = 1$, $I_1^e = 4$, and $S^u = 0$, and are interested in researching the impact of different communication modes, that is, Unicast and multicast, on the system power consumption.
Specifically, we plot the impact of the data rate requirements of eMBB UEs on the power consumption of BSs.
When the data requirement of a UE ranges from $3$ Mb/s to $15$ Mb/s, Fig. \ref{fig_eMBB_single_slice} depicts the trend of the total power consumption obtained by IARA-$\alpha\beta$ using Unicast and multicast communication modes, respectively. In this figure, a zero-one vector $\bm {u}^e$ is leveraged to indicate whether eMBB UEs can be served. If UE $i$ can be served, we let $u_i^e = 1$; otherwise, $u_i^e = 0$ for all $i \in {\cal I}_1^e$.
\begin{figure}[!t]
\centering
\includegraphics[width=2.9in]{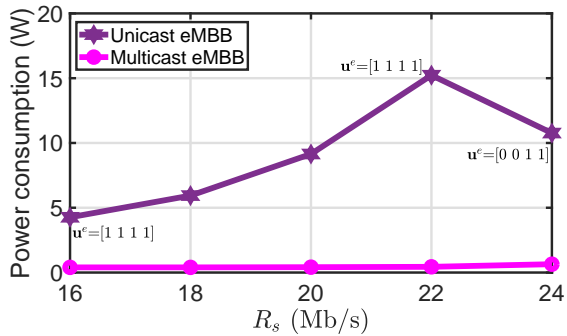}
\caption{eMBB single slice results: power consumption comparison with varying $R_s$.}
\label{fig_eMBB_single_slice}
\end{figure}

We can obtain the following observations from Fig. \ref{fig_eMBB_single_slice}:
\begin{itemize}
    \item Under the similar parameter setting, RRHs selecting the Unicast mode need much greater transmit power than RRHs using the multicast mode. This is because the total system bandwidth can be allocated to each eMBB UE under the multicast mode. When the Unicast mode is adopted, the total system bandwidth will be shared by all eMBB UEs. As a result, greater transmit power is orchestrated for RRHs to satisfy the data requirement of UEs under the Unicast mode. Besides, under the Unicast mode, some UEs cannot be served if a greater data rate is configured. For example, the connections between two UEs and RRHs are interrupted when $R_s = 24$ Mb/s.
    \item Under both Unicast and multicast modes, a larger data rate indicates a greater system power consumption. Nevertheless, the system power consumption under the Unicast mode grows up fast with an increasing $R_s$. The system power consumption under the multicast mode slowly increases with $R_s$.
\end{itemize}

Second, we assume that the CoMP-enabled RAN slicing system supports one bursty URLLC slice i.e., $S^u = 1$ and $S^e = 0$, and interested in studying the difference between schemes of enforcing the minimum SNR constraint and relaxing the minimum SNR constraint. Fig. \ref{fig_URLLC_single_slice} depicts the total utilities obtained by the proposed algorithm using two different SNR schemes.
\begin{figure}[!t]
\centering
\includegraphics[width=2.9in]{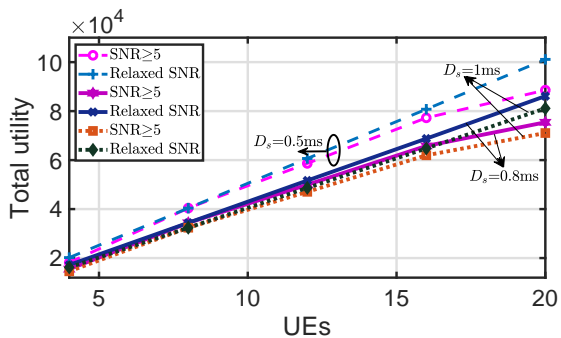}
\caption{URLLC single slice results: total utility comparison with diverse number of UEs.}
\label{fig_URLLC_single_slice}
\end{figure}

From this figure, we can observe that:
\begin{itemize}
    \item As the minimum SNR constraint is relaxed, the feasible region of the RAN slicing problem is enlarged. Accordingly, the total utility achieved by IARA-$\alpha\beta$ under the case of relaxing the minimum SNR constraint is greater than that of enforcing the SNR constraint.
    \item The total utility obtained by the IARA-$\alpha\beta$ algorithm decreases with an increasing $D_s$. This is mainly because a greater $D_s$ leads to a small URLLC profit mapped by $D_s$. Besides, if all UEs can be served, then the total utility of IARA-$\alpha\beta$ increases with an increasing number of UEs.
\end{itemize}

\subsubsection{Results of service multiplexing}
In this subsection, we aim at verifying the effectiveness of the proposed algorithm by comparing it with other algorithms and exploring the impact of some crucial system parameters on the algorithm performance. To this aim, we consider a CoMP-enabled RAN slicing system with a configuration of three multicast eMBB slice requests and two bursty URLLC slice requests. The particular slice request parameters are given in Table \ref{table_slice_request}.
\begin{table}[!t]
\renewcommand{\arraystretch}{1.3}
\caption{Slice request parameters}
\label{table_slice_request}
\centering
\begin{tabular}{|c|c|c|c|c|}
\hline
\multicolumn{3}{|c|}{eMBB slices} &	\multicolumn{2}{|c|}{URLLC slices} \\\hline
$\{I_1^e, R_1\}$ &	$\{I_2^e, R_2\}$ & $\{I_3^e, R_3\}$ & $\{I_1^u, D_1\}$ & $\{I_2^u, D_2\}$ \\\hline
$\{4, 6 {\rm Mb/s}\}$  & $\{6, 4 {\rm Mb/s}\}$ & $\{8, 2 {\rm Mb/s}\}$ & $\{3, 1 {\rm ms}\}$ & $\{5, 2 {\rm ms}\}$ \\\hline
\end{tabular}
\end{table}

As the latency requirement of URLLC UEs is one of the most parameters in URLLC slices, we plot the trend of the total utilities of all comparison algorithms over the latency requirement of URLLC UEs in Fig. \ref{fig_utility_Ds}. In this figure, we reconfigure $\{D_s\}$ of URLLC slices as $D_1 = 0.25 m$ millisecond, $D_2 = 0.5 m$ millisecond with $m \in \{2, 3, \ldots, 10\}$ and keep the parameter configuration of eMBB slices unchanged (as in Table \ref{table_slice_request}). In this figure, we use $E^u$ to denote the power consumption of RRHs for serving URLLC UEs.
\begin{figure}[!t]
\centering
\includegraphics[width=2.9in]{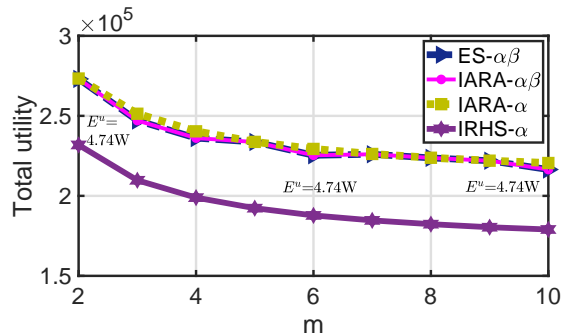}
\caption{Obtained total utilities of all comparison algorithms vs. $D_s$.}
\label{fig_utility_Ds}
\end{figure}

The following observations can be achieved from Fig. \ref{fig_utility_Ds}:
\begin{itemize}
    \item IRHS obtains the smallest total utility. This is because the system bandwidth reserved for eMBB slices are not enough to serve all eMBB UEs although extra system bandwidth is not required in the IRHS algorithm to ensure a low URLLC packet blocking probability. For the other algorithms, they can reserve more system bandwidth for eMBB slices after exploiting the bursty feature of URLLC traffic. Thus, more eMBB UEs can be served, and a greater total utility is obtained.
    \item Although a great $D_s$ might indicate that a small transmit power can be orchestrated for RRHs to satisfy the latency requirement of URLLC UEs, the goal the RAN slicing problem is to provide an energy-efficient communication service. As a result, the varying of $D_s$ does not affect the power consumption of RRHs for serving URLLC UEs.
    \item The total utilities of all comparison algorithms decrease with an increasing $D_s$ as a great $D_s$ generates a small URLLC profit mapped by it.
    \item IARA-$\alpha\beta$ can obtain the same utility as ES-$\alpha\beta$, which may mean that the heuristic resource mask scheme obtains the optimal result.
    \item IARA-$\alpha$ obtains the greatest total utility. Different from IARA-$\alpha\beta$ and ES-$\alpha\beta$, IARA-$\alpha$ suggests the system to orchestrate less system bandwidth for URLLC slices, and then more bandwidth can be allocated to eMBB slices. A greater system bandwidth leads to less power consumption, and thus, a greater utility is obtained.
\end{itemize}

\begin{figure}[!t]
\centering
\includegraphics[width=2.9in]{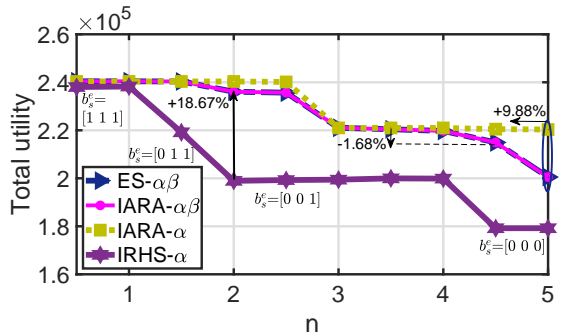}
\caption{Obtained total utilities of all comparison algorithms vs. $R_s$.}
\label{fig_utility_Rs}
\end{figure}

We next plot the trend of total utilities obtained by all comparison algorithms over diverse data rate requirements of eMBB UEs in Fig. \ref{fig_utility_Rs}. In this figure,
we reconfigure $\{R_s\}$ of eMBB slices as $R_1 = 3 n$ Mb/s, $R_2 = 2 n$ Mb/s, and $R_1 = n$ Mb/s with $n \in \{0.5, 1, \ldots, 5\}$ and keep the parameter setting of URLLC slice requirements unchanged (as in Table \ref{table_slice_request}).

The following observations can be achieved from Fig. \ref{fig_utility_Rs}:
\begin{itemize}
    \item IARA-$\alpha$ obtains the greatest total utility. ES-$\alpha\beta$ and IARA-$\alpha\beta$ gain the same total utility that is slightly smaller than that of the IARA-$\alpha$ algorithm when $n < 5$. Besides, compared with IARA-$\alpha\beta$, IARA-$\alpha$ improves the obtained total utility by $9.88\%$ when $n = 5$.
    \item For the IRHS algorithm, it obtains the smallest total utility. When $n \le 1$, the RAN slicing system can accept all three eMBB slice requests. The number of acceptable eMBB slice requests decreases with an increasing $n$. Further, all eMBB slice requests are declined when $n = 5$. For the other algorithms, a great data rate also leads to a reduced number of acceptable eMBB slice requests. This result may be inevitable due to the limitation of system resources. For example, more system bandwidth may be allocated to eMBB UEs when they have higher data rate requirements. However, as the bandwidth orchestrated for eMBB slices is limited, the slice requests of eMBB UEs may be declined when a high data rate is configured.
    \item Owing to the resource constraint, the increase of data rate requirements of eMBB UEs also results in a reduction on the URLLC slice utility. For example, compared with the obtained URLLC slice utility of IARA-$\alpha\beta$ at $n=3.5$, its obtained URLLC slice utility is lowered by $1.68\%$ when $n = 4.5$.
\end{itemize}

Next, we discuss the energy efficiency of all comparison algorithms and plot the obtained total utilities of all comparison algorithms over the energy efficiency coefficient $\eta$ in Fig. \ref{fig_utility_eta}.
From Fig. \ref{fig_utility_eta}, we can observe that:
\begin{figure}[!t]
\centering
\subfigure[Obtained total utility vs. $\eta$]{\includegraphics[width=2.9in]{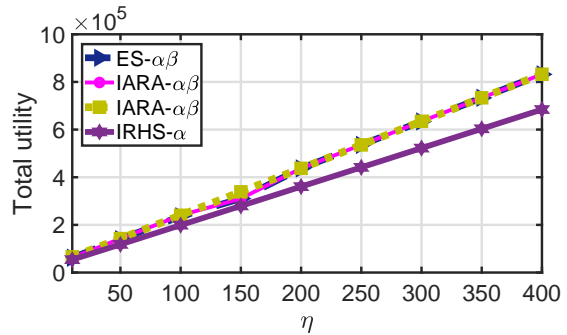}%
\label{fig_first_case}} \\
\subfigure[System power consumption vs. $\eta$]{\includegraphics[width=2.9in]{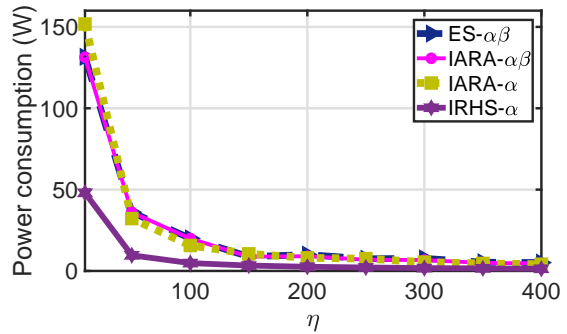}%
\label{fig_second_case}}
\caption{Trends of total utilities and system power consumption of all comparison algorithms.}
\label{fig_utility_eta}
\end{figure}

\begin{itemize}
    \item It is interesting to find that the total utilities obtained by all comparison algorithms rapidly grow up with an increasing energy efficiency coefficient. An increasing $\eta$ leads to a fast decreasing system power consumption and system profit. However, a great $\eta$ will result in a great system power balance that dominates the profit. In other words, this interesting tendency is the outcome of the special mathematical structure of the objective function.
    \item Except for the case of setting $\eta = 0.1$, the obtained total utility by the IRHS algorithm is still smaller than that of the other comparison algorithms. As explained above, the main reason is the lack of system bandwidth reserved for eMBB slices.
\end{itemize}

At last, we plot the trend of the obtained total utilities of all comparison algorithms over different system bandwidth configurations in Fig. \ref{fig_utility_Bandwidth} to verify the impact of system bandwidth on the algorithm performance.
\begin{figure}[!t]
\centering
\includegraphics[width=2.9in]{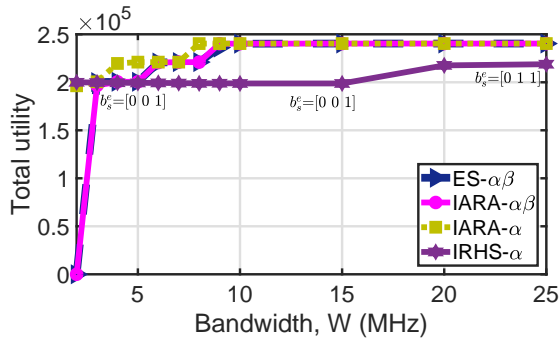}
\caption{Trends of total utilities of all comparison algorithms under different system bandwidth $W$.}
\label{fig_utility_Bandwidth}
\end{figure}
Fig. \ref{fig_utility_Bandwidth} shows that:
\begin{itemize}
    \item Except for the bandwidth configuration of $W = 2$ MHz, the achieved total utilities of the former three algorithms are greater than that of the IRHS algorithm. Owing to the configuration of a small system bandwidth, which is not enough for providing communication services with a low decoding probability and/or a low packet blocking probability requirement, IARA-$\alpha\beta$ and IARA-$\alpha$ cannot accept any slice requests from eMBB and URLLC UEs when $W=2$ MHz.
    \item For the former three comparison algorithms, their achieved total utilities increase with the increase of the system bandwidth. For the IRHS algorithm, its obtained total utility remains unchanged when the system bandwidth varies from $2$ MHz to $15$ MHz. This is because only one eMBB slice can be accepted when $W \le 15$ MHz. When $15 < W \le 25$, the total utility of IRHS is increased as two eMBB slice requests are accepted.
    \item The IARA-$\alpha$ algorithm obtains a greater total utility than that of the IARA-$\alpha\beta$ and ES-$\alpha\beta$ algorithms. At last, it is exciting to find that the greedy-search-based algorithm can optimally mask resources under diverse bandwidth configurations.
\end{itemize}

\section{Conclusion}
This paper investigated the possibility of converging multicast eMBB and bursty URLLC services onto a shared RAN enabling CoMP transmissions. To this aim, we first formulated the CoMP-enabled RAN slicing problem for multicast eMBB and bursty URLLC service multiplexing as a multi-timescale optimization problem with a goal of maximizing the total eMBB and URLLC slice utilities, subject to the total system bandwidth and transmit power constraints. By exploiting some approximation and relaxation schemes, an iterative algorithm with provable performance guarantees was then developed to mitigate the multi-timescale problem. At last, we designed a CoMP-enabled RAN slicing system prototype and conducted extensive simulations to verify the effectiveness of the iterative algorithm.

\appendix

\subsection{Proof of Lemma 1}
Denote $\pi(\bm{b}^u, W^u)$ as the probability that the state of transmissions in a steady system is ${\bm b}^u = \{{\bm b}_1^u, \ldots, {\bm b}_{S^u}^u\}$, where we omit the notation $(t)$ for brevity. According to the standard results from the queueing theory, we have \cite{harchol2013performance}
\begin{equation}\label{eq:pi_b}
\pi(\bm{b}^u, W^u) = G \prod\nolimits_{s = 1}^{S^u} {\left( {\frac{{\rho _s^{{n_s}}}}{{{n_s}!}}} \right)}
\end{equation}
where ${n_s} = \sum\nolimits_{{i} = 1}^{{I}_s^u} {b_{{i,s}}^{u}} $, ${G^{ - 1}} = \sum\nolimits_{{\bar {{\bm b}^u}} \in {\cal B}} {\prod\nolimits_{s = 1}^{S^u} {\left( {\frac{{\rho _s^{{{\bar n}_s}}}}{{{{\bar n}_s}!}}} \right)} }$ with ${\cal B} = \left\{ {\bar {{\bm b}^u}|{\bm \omega}^u{{{\bar {{\bm b}^u}}}^{\rm T}} \le W^u} \right\}$. ${\cal B}$ is the set of all URLLC UE configurations such that the bandwidth constraint related to URLLC traffic is not violated.

Define ${\hat \pi} ({\bm b}^u,W^u)$ for the case when bandwidths and transmission durations are $\hat {\bm \omega}^u $ and $\hat {\bm d}$, respectively, with $q {\rho_s}$ replacing $\rho_s$ in (\ref{eq:pi_b}). Define ${\bm b}_{\backslash s}^u = [{\bm b}_1^u,\ldots,{\bm b}_{s-1}^u,{\bm 0}, {\bm b}_{s+1}^u,\ldots,{\bm b}_{S^u}^u]$, i.e., all UEs of slice $s$ are deactivated. Let $\pi({\bm b}_{\backslash s}^u,W)$ and ${\hat \pi}({\bm b}_{\backslash s}^u,W)$ be the steady probabilities of ${\bm b}_{\backslash s}^u $ under bandwidths ${\bm \omega}^u$ and ${\hat {\bm \omega}}^u$, respectively. According to the standard results of an $M/GI/\infty $ queue system, as $W^u \to \infty $, $\pi({\bm b}_{\backslash s}^u,W^u)$ and ${\hat \pi}({\bm b}_{\backslash s}^u,W^u)$ converge to a common Poisson distribution, i.e.,
\begin{equation}\label{eq:limit_pi}
\begin{array}{l}
\mathop {\lim }\limits_{W^u \to \infty } \pi ({\bm b}_{\backslash s}^u,W^u) = \mathop {\lim }\limits_{W^u \to \infty } \hat \pi ({\bm b}_{\backslash s}^u,W^u)\\
 = \exp \left( { - \sum\nolimits_{l \ne s} {{\rho _l}} } \right)\prod\nolimits_{l \ne s} {\left( {\frac{{\rho _l^{{n_l}}}}{{{n_l}!}}} \right)}
\end{array}
\end{equation}

Recalling the PASTA property, the blocking probability of a new arrival destined to slice $k$ can then be written as
\begin{equation}\label{eq:block_probability}
    p_k({\hat {\bm \omega}^u},{\hat {\bm d}},{\bm \lambda},W^u) = \sum\limits_{{\bm b}^u \in {{\hat {\cal B}}_k}} {\hat \pi ({\bm b}^u,W^u)}
\end{equation}
where ${\hat {\cal B}}_k = \{{\bm b}^u|{\hat {\bm \omega}^u}{{\bm b}^{u{\rm T}}} \le W^u$ and ${\hat {\bm \omega}^u}{{\tilde {\bm b}}^{u{\rm T}}} > W^u\}$ with $\sum\nolimits_{{i} = 1}^{I_k^u} {{ {\tilde b}}_{{i,k}}^{u}} = {n_k} + 1$. ${\hat {\cal B}}_k$ is the set of blocking states of slice $k$. Given ${\bm b} _{\backslash s}^u$, a blocking event must occur when $n_s \in \left\{ {\left\lceil {\frac{{(W^u - {{{\hat {\bm \omega}^u}}}{{\bm b}_{\backslash s}^{u{\rm T}}})}}{{\mathop {\min }\limits_{{i \in {\cal I}_s^u}} { \omega_{{i,s}}^{u}} }/q}} \right\rceil  - \left\lceil {\frac{{{\mathop {\min }\limits_{{i \in {\cal I}_k^u}} { \omega_{{i,k}}^{u}} }}}{{\mathop {\max }\limits_{{i \in {\cal I}_s^u}} { \omega_{{i,s}}^{u}} }/q}} \right\rceil  + 1, \ldots ,\left\lceil {\frac{{(W^u - {{{\hat {\bm \omega}^u}}}{{\bm b}_{\backslash s}^{u{\rm T}}})}}{{ \mathop {\min }\limits_{{i \in {{\cal I}_s^u}}} { \omega_{{i,s}}^{u}} }/q}} \right\rceil } \right\}$. Therefore, using the full probability formula, one can re-write (\ref{eq:block_probability}) as follows
\begin{equation}\label{eq:probability}
    p_k({\hat {\bm \omega}^u},{\hat {\bm d}},{\bm \lambda},W^u) = \sum\limits_{{{\bm b}_{\backslash s}^u} \in {{{\cal B}}_{\backslash s}}} {\varphi ({{\bm b}_{\backslash s}^u},q,{\hat {\bm \omega}}^u,{\hat {\bm d}},W^u) \hat \pi ({{\bm b}_{\backslash s}^u},W^u)}
\end{equation}
where ${{\cal B}_{\backslash s}} = \left\{ {{{\bm b}_{\backslash s}^u}|{{\hat {\bm \omega}^u}}{{\bm b}_{\backslash s}^{u{\rm T}}} \le W^u} \right\}$ and
\begin{equation}\label{eq:probability}
    \varphi ({{\bm b}_{\backslash s}^u},q,{\hat {\bm \omega}^u},{\hat {\bm d}},W^u) = \frac{{\sum\nolimits_{n = n_{lb}}^{\left\lceil {\frac{{(W^u - {{{\hat {\bm \omega}^u}}}{{\bm b}_{\backslash s}^{u{\rm T}}})}}{{ \mathop {\min }\limits_{{i \in {{\cal I}_s^u}}} { \omega_{{i,s}}^{u}} }/q}} \right\rceil} {\frac{{{{(q{\rho _s})}^n}}}{{n!}}} }}{{\sum\nolimits_{n = 0}^{\left\lceil {\frac{{(W^u - {{{\hat {\bm \omega}^u}}}{{\bm b}_{\backslash s}^{u{\rm T}}})}}{{ \mathop {\min }\limits_{{i \in {{\cal I}_s^u}}} { \omega_{{i,s}}^{u}} }/q}} \right\rceil} {\frac{{{{(q{\rho _s})}^n}}}{{n!}}} }}
\end{equation}
where $n_{lb} = { \left\lceil {\frac{{(W^u - {{{\hat {\bm \omega}^u}}}{{\bm b}_{\backslash s}^{u{\rm T}}})}}{{\mathop {\min }\limits_{{i \in {\cal I}_s^u}} { \omega_{{i,s}}^{u}} }/q}} \right\rceil  - \left\lceil {\frac{{{\mathop {\min }\limits_{{i \in {\cal I}_k^u}} { \omega_{{i,k}}^{u}} }}}{{\mathop {\max }\limits_{{i \in {\cal I}_s^u}} { \omega_{{i,s}}^{u}} }/q}} \right\rceil  + 1}$.

For a given $q$, if $\rho_s < 1$, we then have $\varphi ({{\bm b}_{\backslash s}^u},q,{\hat {\bm \omega}^u},{\hat {\bm d}},W^u) < \varphi ({{\bm b}_{\backslash s}^u},1,{\hat {\bm \omega}^u},{\hat {\bm d}},W^u)$ for large $W^u$. According to (\ref{eq:probability}) and (\ref{eq:limit_pi}), we may conclude that $p_k({\bm \omega}^u, {\bm d}, {\bm \lambda}, W^u) \ge p_k({\hat {\bm \omega}^u}, {\hat {\bm d}}, {\bm \lambda}, W^u)$ for all $k \in {\cal S}^u$.

\ifCLASSOPTIONcaptionsoff
  \newpage
\fi




%
\bibliographystyle{IEEEtran}
\bibliography{Network_slicing}

\end{document}